\documentclass[12pt]{article}
\usepackage[doublespacing]{setspace}
\usepackage{latexsym}
\usepackage[pdftex]{graphicx}
\usepackage[pdftex]{color}
\usepackage[normalem]{ulem} 
\RequirePackage[OT1]{fontenc}
\RequirePackage{amsthm,amsmath,amssymb}
\RequirePackage{natbib}
\RequirePackage[colorlinks,citecolor=blue,urlcolor=blue]{hyperref}
\usepackage{epsfig,xcolor}
\usepackage{comment}
\newcommand{\mE}{\ensuremath{\mathbb E}}
\newcommand{\mP}{\ensuremath{\mathbb P}}
\newcommand{\mR}{\ensuremath{\mathbb R}}

\newcommand{\E}{\ensuremath{\mathbb E}}
\newcommand{\PP}{\ensuremath{\mathbb P}}

\newcommand{\vZ}{\ensuremath{\vec Z}}
\newcommand{\vz}{\ensuremath{\vec z}}
\newcommand{\vh}{\ensuremath{\vec h}}
\newcommand{\vone}{\ensuremath{\vec 1}}
\newcommand{\vD}{\ensuremath{\vec D}}
\newcommand{\II}{\ensuremath{\mathbb I}}

\newcommand{\tD}{\ensuremath{\tilde{D}}}

\newcommand{\rhcomm}[1]{\textcolor{red}{RH:#1}}

\newcommand{\rhdel}[1]{}


\numberwithin{equation}{section}
\theoremstyle{plain}
\newtheorem{proposition}{Proposition}[section]
\newtheorem{thm}{Theorem}[section]
\newtheorem{lem}{Lemma}[section]




\textwidth = 7in \textheight = 9in

\oddsidemargin = -0.1in

\evensidemargin = +0.3in

\parindent 0pt

\parskip 10pt

\topmargin = -1.5cm

\begin{document}
\noindent {\sffamily\bfseries\Large  Optimal  control of false discovery criteria in the two-group model }

\noindent%

\textsf{Ruth Heller}, Department of
Statistics and Operations Research, Tel-Aviv university, Tel-Aviv  6997801, 
Israel, \textsf{E-mail:} ruheller@gmail.com\\
\textsf{Saharon Rosset}, Department of
Statistics and Operations Research, Tel-Aviv university, Tel-Aviv  6997801, 
Israel, \textsf{E-mail:} saharon@tauex.tau.ac.il\\

\thispagestyle{empty}%

\noindent%

\textsf{Abstract. \
The highly influential two-group model in testing a large number of statistical hypotheses assumes that the test statistics are drawn independently from a mixture of a high probability null distribution and a low probability alternative. Optimal control of the   marginal false discovery rate (mFDR), in the sense that it provides maximal power (expected true discoveries) subject to mFDR control, is known to be achieved by thresholding the  local false discovery rate (locFDR), i.e., the probability of the hypothesis being null given the set of test statistics,  with a fixed threshold. 
We address the challenge of controlling optimally the popular false discovery rate (FDR) or positive FDR (pFDR) rather than mFDR in the general two-group model, which also allows for dependence between the test statistics. These criteria are less conservative than the mFDR criterion, so  they make more rejections in expectation.
We derive their optimal multiple testing (OMT) policies, which  turn out to be thresholding the locFDR with a threshold that is a function of the entire set of statistics.   
We develop an  efficient algorithm for finding these policies, and use  it 
for problems with thousands of hypotheses. 
\rhdel{In simulations with $K=5000$ hypotheses,  the power gain in OMT policies that take the known dependence into account can be very large. For  unknown dependence,  the  misspecified OMT policies which assume the test statistics are independent for FDR or pFDR control  can  have an inflated error level, while  for mFDR control the desired nominal level is maintained.} We illustrate these  procedures on gene expression studies. 
}
\medskip 

\noindent%
\textsf{Keywords: Multiple testing;  False discovery rate; Positive FDR; Infinite linear programming; Large scale inference. }\newpage

\setcounter{page}{1}%

\section{Introduction}\label{sec-intro}

In large scale inference problems, hundreds or thousands of hypotheses are tested in order to discover the set of non-null hypotheses.  Such problems are ubiquitous in modern applications in 
medicine, genetics, particle physics,  ecology, and psychology. 
Multiple testing procedures applied to these large scale problems should  control for false discoveries, but they should  not be over-conservative, since this limits the ability of scientists to make true discoveries. Thus it is natural to seek multiple testing procedures that control for false discoveries, while assuring as many discoveries as possible.

In order to guarantee that not too many false positives are among the discoveries, \cite{Benjamini95} introduced the false discovery rate (FDR). This error measure gained tremendous popularity in large scale testing, as it was less stringent than traditional measures like the familywise error rate. 
Given a rejection policy, denote the (random) number of rejected null hypotheses by $R$, and the number of falsely rejected hypotheses (true nulls) by $V$.  The FDR is 
\begin{eqnarray*}
\mbox{FDR: }  \mE\left( \frac{V}{\max(R,1)}\right) = \mE\left( \frac{V}{R}\mid R>0\right)\textrm{Pr} (R>0). 
\end{eqnarray*}

The ``two-group model", first introduced by \cite{Efron01}, has been widely used in large scale inference problems \citep{Efron01,Genovese02, Storey03, Sun07, Efron08, Cai17}. The model assumes that the observed test statistics are generated independently from the mixture model $(1-\pi) g(z\mid h=0)+\pi g(z\mid h=1)$, where $h$ follows the Bernoulli($\pi$) distribution, and indicates whether the null is true (h=0) or the alternative holds (h=1). Accordingly $g(z\mid h=0)$ and $g(z \mid h=1)$ represent the distribution of the test statistic under the null and alternative, respectively. 

In this paper, we consider a more general setting,  in which the test statistics can be dependent  and differ in their marginal distributions.  The more general setting has been used in \cite{Xie11} for short range dependence.  We denote this setting, for which the two-group model is a special case, as the ``general two-group model". As in the two-group model,  the hypotheses states vector, $\vh = (h_1,\ldots, h_K)$, has entries sampled independently from the Bernoulli($\pi$) distribution.
  But in our more general setting, the observed test statistics, $\vec Z = (Z_1,\ldots,Z_k)$, are sampled from the joint distribution given $\vh$, so $\vec Z \mid \vh\sim g(\vz\mid \vh).$


Two measures that are similar to the FDR  became popular within the framework of the two-group model. The pFDR was introduced in \cite{Storey03}. The marginal FDR (mFDR) was introduced in  \cite{Genovese02, Sun07}. Their formulas are: 
\begin{eqnarray*}
\mbox{pFDR: }  \mE\left( \frac{V}{R}\mid R>0\right)
; \quad
\mbox{mFDR: } \frac{\mE V}{\mE R}.
\end{eqnarray*}

For the two-group  model, if the rejection policy is  a fixed subset of the real line, then the pFDR and mFDR have been shown to be equivalent  \citep{Storey03}. Moreover, as $K\rightarrow \infty$, all three measures are equivalent \citep{Benjamini08}. 
 \cite{Cai17} claim that there is essentially no difference between the three measures in large-scale testing problems. They say the use of mFDR is mainly for technical considerations, since the ratio of two expectations is easier to handle. In this paper we  show that for large values of $K$ there can still be important differences when aiming at FDR control
, pFDR control, or mFDR control.

The test statistic that plays a central role for inference on which hypotheses are false is the locFDR, defined for the $i$th hypothesis as $T_i(\vz) = \textrm{Pr}(h_i=0\mid \vz)= \frac{(1-\pi)g(\vz\mid h_i=0)}{(1-\pi)g(\vz\mid h_i=0)+\pi g(\vz\mid h_i=1)}$, where $g(\vz \mid h_i)$ is the joint density of $\vz$ given hypothesis state $h_i$ only, rather than the entire vector $\vh$.  
The locFDR was originally introduced by \cite{Efron01} for standard (i.i.d) two-group model,  where it simplifies since  $\textrm{Pr}(h_i=0\mid \vz)  = \textrm{Pr}(h_i=0\mid z_i)$. We  denote it by $T_{marg}(z)=Pr(h=0\mid z)$ and call it the marginal locFDR.
 \cite{Xie11} showed that for the general two-group model,  the policy which 
minimizes the marginal false non-discovery rate (mFNR), i.e., the 
 expected number of non-null non-rejections divided by the  expected number of non-rejections,
 with mFDR control, is to threshold the locFDR statistics with a fixed threshold under the general dependence setting, following the work of  \cite{Sun07} that showed this for the two-group model. 

In this paper, we consider OMT with FDR or pFDR control in addition to mFDR control for the general two-group model. As \cite{Cai17} have noted, since mFDR is the ratio of two expectations, it is easier to handle when seeking an optimal policy. However,  $V/R$ is arguably  the more fundamental quantity the investigator would like control over for finite $K$. Therefore a rejection policy that guarantees control over $V/R$ in expectation, while maximizing the expected number of true rejections, can be very useful. 
We can write the problem of finding the OMT policy as an optimization problem.  Briefly, 
 for a family of $K$ hypotheses, let $\vec D: \mR^K\rightarrow \{0,1\}^K$ be the decision function when the vector of observed statistics is  $\vec z$, so the $i$th coordinate $D_i(\vec z)$ receives the value of one if the $i$th null hypothesis is rejected, and zero otherwise. Let 
$\vone$ be the vector of ones. Then  the number of rejected and falsely rejected hypotheses, respectively, are $R(\vec D(\vec z)) =\vone^t \vD(\vz)= \sum_{k=1}^K D_k(\vec z)$ and $V(\vec D(\vec z)) = (\vone^t-\vh^t)\vD(\vz) = \sum_{k=1}^K (1-h_k)D_k(\vec z).$ We denote by $Err(\vec D)\in \{FDR(\vec D), pFDR(\vec D), mFDR(\vec D) \} $ the error rate for policy $\vec D$.
We seek to maximize the expected number of true discoveries, $$\Pi(\vec D) = \mE(R(\vec D)-V(\vec D))=\mE \left(\vh^t\vD\right),$$ subject to    $Err(\vec D)\leq \alpha$. 
The solutions to these problems we present  in this paper can be extended
 to optimize other notions of power, e..g., to minimize the expectation of the  loss functions 
$L_\lambda(\vec h, \vec D) = \lambda (\vone^t-\vh^t)\vD + \vh^t(\vone-\vD)$
considered in \cite{Sun07} or the mFNR.  Moreover,  the mathematical and algorithmic developments can be easily adapted for developing the OMT policy for other error measures considered in the literature, such as $\mE(V)$ \citep{Storey07} and  false
discovery exceedance (FDX, where for $\gamma \in (0,1), FDX_{\gamma} = Pr(FDP>\gamma)$, \citealt{LehmannRomano05}). 
The chosen definition should capture the true ``scientific'' goal for inference  and the type of discoveries we wish to make.

OMT problems in the general two-group model can be viewed as an infinite-dimensional optimization problem, seeking to maximize one integral (the power), subject to an integral constraint expressing the measure we want to control ---  FDR, mFDR, pFDR, or any other measure. In this paper we adopt this view and demonstrate that for the two-group model we can solve the resulting optimization problem and practically compute the optimal rejection policy for dimension $K$ in the thousands. 
Our main contributions are as follows. 
\begin{enumerate}
\item We show (Theorem \ref{thm-main}) that the OMT policy for FDR or pFDR control turns out to be thresholding the locFDR with a threshold that is a function of the entire set of statistics. This is true for any dependence structure, and  is in contrast to the previously shown OMT policy for mFDR control \citep{Sun07, Xie11}, where the threshold is fixed. 

\item We provide efficient algorithms for finding the OMT policies (\S~\ref{sec-OMT-FDR-pFDR} and \S~\ref{sec-algorithm}), which are based on our formulation of the problem as an infinite integer optimization problem with a single constraint. For dependent test statistics, we address the additional computational challenge of computing the locFDR values.  The infinite-dimensional formulation to finding OMT policies under frequentist strong control of measures like family-wise error rate (FWER) and FDR was applied recently in \cite{Rosset18}. In that setting, it was possible to solve only relatively small problems, and only under exchangeability assumptions, due to the complex constraint structure of strong control. In contrast, the structure of the two-group model, with a single constraint, and the computational shortcuts we introduce below, allow us to find OMT policies for practically any $K$.

\item  Via numerical investigations (\S~\ref{sec-sim}), we show that: there is a  huge potential power gain from incorporating known dependence into the OMT procedures; the power increase from controlling FDR and pFDR over mFDR can be  non-negligible even for thousands of hypotheses; when the signal is weak,  the OMT policy with pFDR control has a significantly lower   probability of zero rejections than  the OMT policy with FDR control, and  the pFDR is (arguably) preferred over FDR  to control optimally for the two-group model. We also demonstrate the potential usefulness of the methods to  gene expression studies (\S~\ref{sec-gene-expression}).


\end{enumerate}


\section{Properties of OMT policies}



For independent test statistics from the two-group model,   \cite{Sun07} showed that 
the OMT policy for mFDR control  is to reject the hypotheses with 
$T_{marg}(z) \leq t_{\alpha}$,  where 
 \begin{equation}\label{eq-talpha}
t_{\alpha} = \max \left \lbrace t: \frac{\sum_{k=1}^K\mE\lbrace(1-h_k) \II(T_{marg}(z_k) \leq t)\rbrace}{\sum_{k=1}^K\mE\lbrace \II(T_{marg}(z_k) \leq t\rbrace} \leq \alpha \right \rbrace.
\end{equation}
 
Since  $\sum_{k=1}^K\mE\lbrace(1-h_k) \II(T_{marg}(z_k) \leq t)\rbrace = K\times \textrm{Pr} (h=0, T_{marg}(z) \leq t)$ and $\sum_{k=1}^K\mE\lbrace \II(T_{marg}(z_k) \leq t)\rbrace = K\times \textrm{Pr} ( T_{marg}(z) \leq t)$,  it follows that 

 $$
  \frac{\sum_{k=1}^K\mE\lbrace(1-h_k) \II(T_{marg}(z_k) \leq t)\rbrace}{\sum_{k=1}^K\mE\lbrace \II(T_{marg}(z_k) \leq t\rbrace}  = \textrm{Pr} (h=0\mid  T_{marg}(z) \leq t).
  $$
\cite{Storey03} used this observation to show that when the rejection policy is a fixed region of the real line, $pFDR = mFDR$. Therefore, the optimal rule has a nice Bayesian interpretation: by reporting a hypothesis as non-null if $T_{marg}(z)\leq t_{\alpha}$,  then the mFDR is the chance that a false discovery was made, since 
$ mFDR  = pFDR = \textrm{Pr}(h=0 \mid T(z)\leq t_{\alpha}).$
Since $FDR = pFDR\times Pr(R>0)\leq pFDR$, and  since the OMT policy with mFDR control is a fixed region of the real line, it follows that the OMT policy with mFDR control at level $\alpha$ also controls the FDR and the pFDR at level $\alpha$. Therefore, the OMT policy  with FDR or pFDR  control  will necessarily be at least as powerful as the OMT policy  with mFDR control. Interestingly, for any  number of hypotheses $K$ from the two-group model, the OMT policies with pFDR or FDR control necessarily differ from the OMT policy with mFDR control. This difference exists  even if  the probability of zero rejections is one, and will become clear from the algorithm for constructing the OMT policies with pFDR and FDR control in \S~\ref{sec-algorithm}. 
We formalize these interesting properties and few others in the following proposition. 

\begin{proposition}\label{prop1}
For $K$ test statistics independently drawn from the two-group model
, if the null and non-null distributions  have  positive densities with respect to the Lebesgue measure on their region of support, then: 
\begin{enumerate}
\item $\Pi_{OMT-FDR}\geq \Pi_{OMT-pFDR} \geq \Pi_{OMT-mFDR}$, where $\Pi_{OMT-Err}$ is  the expected number of true discoveries, for the OMT policy with level $\alpha$ control of $Err\in \{ FDR, pFDR, mFDR\}$.
\item The OMT policy with mFDR control differs from the OMT policy with pFDR control. 
\item $mFDR> pFDR$ for the OMT policy with pFDR control. 
\item If the OMT policy with FDR control has probability zero of no rejections, then it coincides with the OMT policy with pFDR control.  
\end{enumerate}
\end{proposition}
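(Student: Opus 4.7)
The plan is to establish the claims in the order $(1)\Rightarrow(4)\Rightarrow(3)\Rightarrow(2)$. The recurring tool is the identity $FDR=pFDR\cdot\Pr(R>0)$ combined with Storey's observation, recalled just before the proposition, that any fixed-threshold rule on $T_{marg}$ satisfies $pFDR=mFDR$. I will also invoke uniqueness of each OMT policy (up to a Lebesgue-null modification), which follows from the positive-density hypothesis together with the characterization of OMT rules in Theorem~\ref{thm-main}.

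For (1), $FDR\leq pFDR$ implies that every $pFDR$-level-$\alpha$ policy is $FDR$-level-$\alpha$ feasible, so $\Pi_{OMT\text{-}FDR}\geq \Pi_{OMT\text{-}pFDR}$. For the second inequality, the OMT-mFDR policy is a fixed-threshold rule on $T_{marg}$ and therefore has $pFDR=mFDR\leq \alpha$; being $pFDR$-feasible, it yields $\Pi_{OMT\text{-}pFDR}\geq \Pi_{OMT\text{-}mFDR}$. Claim (4) then follows immediately: if the OMT-FDR policy has $\Pr(R=0)=0$ it satisfies $pFDR=FDR\leq\alpha$, so it is $pFDR$-feasible with power $\Pi_{OMT\text{-}FDR}$; combined with $\Pi_{OMT\text{-}FDR}\geq\Pi_{OMT\text{-}pFDR}$ from (1), both powers agree and uniqueness identifies the two policies.

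For (3) I argue by contradiction. Suppose $mFDR\leq pFDR$ for the OMT-pFDR policy; saturation of the $pFDR$ constraint at $\alpha$ then makes this policy $mFDR$-feasible, and (1) forces $\Pi_{OMT\text{-}pFDR}=\Pi_{OMT\text{-}mFDR}$. Uniqueness then identifies the OMT-pFDR policy with the fixed-threshold rule on $T_{marg}$, contradicting Theorem~\ref{thm-main}, which characterizes the OMT-pFDR rule by a nontrivially data-dependent threshold. To upgrade $mFDR\neq pFDR$ to the stated direction $mFDR>pFDR$, I will use the identity
\begin{equation*}
pFDR\;=\;mFDR\;-\;\frac{\mathrm{Cov}(V/R,\,R\mid R>0)}{\mE[R\mid R>0]},
\end{equation*}
obtained from $\mE[V\mid R>0]=\mE[(V/R)\,R\mid R>0]$, and argue from the structure of Theorem~\ref{thm-main} that any $pFDR$-saturating OMT policy must reject more aggressively on realizations where $R$ is large, since each additional rejection costs only $\approx 1/R$ of the $pFDR$ budget. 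Consequently $V/R$ and $R$ are positively correlated conditional on $R>0$, making the covariance term positive.

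Claim (2) is then immediate: OMT-pFDR has $mFDR>pFDR$ while the fixed-threshold OMT-mFDR policy has $mFDR=pFDR$, so the two cannot coincide. The main obstacle will be pinning down the \emph{sign} of the covariance in (3); ruling out an accidental degeneracy of OMT-pFDR to a fixed-threshold rule requires Theorem~\ref{thm-main}, and extracting the positive sign requires a qualitative reading of how that theorem's adaptive threshold varies with the full vector $\vz$.
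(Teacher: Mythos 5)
Your skeleton matches the paper's: item 1 from the feasibility nesting $FDR\leq pFDR$ together with $pFDR=mFDR$ for the fixed-threshold mFDR rule; item 4 from $pFDR=FDR$ when $\Pr(R=0)=0$; item 3 by contradiction starting from $mFDR\leq pFDR$. But there are two genuine problems in the execution. First, your contradiction in (3) must terminate in the fact that the OMT-pFDR policy is \emph{not} (a.e.\ equal to) a fixed-threshold rule, and you attribute this to Theorem~\ref{thm-main}. That theorem only says the optimal policy thresholds the locFDR at some $\vz$-dependent level; a constant threshold is a special case, so the theorem cannot by itself rule out that OMT-pFDR degenerates to a single-step rule. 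The paper obtains this fact from the explicit optimality conditions: the mFDR-optimal rule must be the single-step rule $T_i(\vz)<\tfrac{1+\mu\alpha}{1+\mu}$ (Appendix~\ref{app-mfdr-rule}), whereas the pFDR-optimal rule is the step-down rule of \S~\ref{sec-algorithm}, and for $K=2$ the two rejection regions are written out explicitly and differ on a set of positive Lebesgue measure. That is exactly the content of item 2, which the paper proves \emph{first} and then feeds into item 3. Your ordering --- proving (3) and then deducing (2) from it --- is circular unless you supply an independent proof that OMT-pFDR is not single-step, which you have not.

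Second, the covariance identity is both unnecessary and unfinished. Your contradiction hypothesis in (3) is already $mFDR\leq pFDR$, the full negation of the claim $mFDR>pFDR$; refuting it yields the strict inequality with the correct direction in one step, exactly as in the paper. There is no residual case ``$mFDR\neq pFDR$ with unknown sign'' to dispose of, so the argument about $\mathrm{Cov}(V/R,\,R\mid R>0)>0$ --- which you yourself flag as the main obstacle --- can simply be deleted; had it been needed, the claim that the adaptive threshold induces positive correlation would require proof rather than a heuristic. A smaller point: the ``uniqueness up to null sets'' you invoke in items 3 and 4 does not follow from Theorem~\ref{thm-main} alone; the paper instead identifies the policies via the explicit step-down construction (for item 4, by observing that the FDR and pFDR step-down procedures coincide once the hypothesis with minimal locFDR is rejected with probability one).
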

See Appendix \ref{app-proofs} for the proof.

Interestingly, the  mFDR controlling procedure described above  controls the mFDR at the nominal level even if the test statistics are dependent or differ in their marginal distributions, as expressed in the following simple result.

\begin{proposition}\label{prop2}
For $K$ test statistics  drawn from the general two-group model, the procedure that rejects the hypotheses with 
$T_{marg}(z_i) \leq t_{\alpha}$, $i=1, \ldots, K$,  satisfies $mFDR \leq \alpha$. 
\end{proposition}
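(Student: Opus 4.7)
The plan is to exploit the defining feature of $mFDR$: it is a ratio of two unconditional expectations. Because expectations distribute over sums with no independence assumption, the numerator and denominator decouple into sums of per-hypothesis contributions, each of which depends only on the marginal joint law of a single pair $(h_k,z_k)$. Consequently, the dependence structure of $\vec Z$ never enters the calculation, and the result reduces to the definition of $t_\alpha$.

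Concretely, I would proceed as follows. Let $D_k(\vec z) = \II(T_{marg}(z_k) \leq t_\alpha)$ be the decision on hypothesis $k$ under the stated procedure, so $R = \sum_{k=1}^K D_k(\vec z)$ and $V = \sum_{k=1}^K (1-h_k) D_k(\vec z)$. Applying linearity of expectation gives
\begin{equation*}
\mE R = \sum_{k=1}^K \mE\{\II(T_{marg}(z_k) \leq t_\alpha)\}, \quad \mE V = \sum_{k=1}^K \mE\{(1-h_k)\II(T_{marg}(z_k)\leq t_\alpha)\},
\end{equation*}
with no appeal to independence, either across coordinates of $\vec Z$ or across the $h_k$'s. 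Taking the ratio, $mFDR = \mE V / \mE R$ becomes exactly the expression appearing in the defining condition (2.1) of $t_\alpha$, evaluated at $t = t_\alpha$. By the very definition of $t_\alpha$ as the largest $t$ making that ratio at most $\alpha$, the conclusion $mFDR \leq \alpha$ follows immediately.

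There is essentially no obstacle in the argument; the only subtlety is ensuring that the per-term summands are well-defined. This requires the implicit assumption that the marginal distribution of each $(h_k, z_k)$ matches the two-group model structure used in (2.1), so that $T_{marg}(z_k) = \Pr(h_k = 0 \mid z_k)$ is computed from the same marginal null/non-null mixture that defines $t_\alpha$. Under this alignment, the proof is three lines. It is worth emphasizing why the argument is so clean for $mFDR$ and why one should expect it to \emph{fail} for $FDR$ or $pFDR$: those criteria involve $\mE(V/R)$, which couples the numerator and denominator in a way that genuinely depends on the joint law of $\vec Z$. This is the same phenomenon that forces the OMT policies for $FDR$ and $pFDR$ in Theorem~\ref{thm-main} to threshold the locFDR at a cutoff that is a function of the entire vector of statistics.
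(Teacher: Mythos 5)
Your proposal is correct and follows essentially the same route as the paper's proof: decompose $\mE V$ and $\mE R$ by linearity of expectation into per-coordinate terms that depend only on the marginal law of each $(h_k, Z_k)$, observe that these are exactly the sums defining $t_\alpha$ in \eqref{eq-talpha}, and conclude from the definition of $t_\alpha$. The paper additionally spells out, via the tower property, that the numerator term equals $\mE_Z\{T_{marg}(Z)\II(T_{marg}(Z)\leq t)\}$ — making explicit the marginal-only dependence you assert — but this is the same argument, not a different one.
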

\begin{proof}
For a single coordinate in the vector of test statistics from the general two-group model,  
$$
 \mE_{h,Z}\lbrace (1-h)\II(T_{marg}(Z)\leq t)\rbrace  = \mE_Z [\mE\lbrace (1-h)\mid Z \rbrace \II(T_{marg}(Z)\leq t)]= \mE_Z \lbrace T_{marg}(Z)\II(T_{marg}(Z)\leq t)\rbrace. 
$$
So all the expectations in \eqref{eq-talpha} depend only on the marginal distributions of the test statistics, and therefore
 $t_{\alpha}$ depends only on the marginal distributions of  $Z_1, \ldots, Z_k$. 
The $mFDR$ of the suggested procedure for test statistics drawn from the general two-group model is $ \frac{\sum_{k=1}^K\mE\lbrace T_{marg}(Z_k)\II(T_{marg}(Z)\leq t)\rbrace}{\sum_{k=1}^K\mE\lbrace \II(T_{marg}(Z_k) \leq t_{\alpha}\rbrace}$, so it follows directly from the definition of $t_{\alpha}$ that this mFDR is bounded above by $\alpha$. 
\end{proof}

If  the  test  statistics from the general two-group model are  dependent, thresholding the {\em marginal} locFDRs is not optimal for mFDR control. However, \cite{Xie11} showed that the OMT policy for mFDR control  is  of the form $T_i(\vz)\leq t$ for $T_i(\vz)=Pr(h_i=0 |\vz)$ the true locFDR. We provide an alternative proof in   \S~\ref{app-mfdr-rule}. Our first key theoretical result is that the optimal FDR and pFDR controlling procedures also reject the hypotheses by thresholding the locFDRs.

\begin{thm}\label{thm-main}
Let $\vz$ be a vector of $K$ test statistics coming from the general two-group model. Then for $Err \in \{FDR, pFDR \}$, the OMT decision policy which satisfies $Err(\vD^{Err})\leq \alpha$ and $\mE(\vh^t \vD^{Err})\geq  \mE(\vh^t \vD) \quad \forall \vD \ \textrm{s.t.} \ Err(\vD)\leq \alpha $, is 
almost surely 
{\em weakly monotone} in the locFDR values: 
$$ T_i(\vz) \geq T_j(\vz)  \Leftrightarrow D^{Err}_i(\vz) \leq D^{Err}_j(\vz).$$
\end{thm}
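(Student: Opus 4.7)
My plan is a swap (exchange) argument by contradiction. Suppose $\vD^{Err}$ fails to be weakly monotone in the locFDRs on a set of positive probability: then there exist indices $i, j$ and a measurable set $B \subset \mR^K$ with $\textrm{Pr}(\vZ \in B) > 0$ on which $T_i(\vz) > T_j(\vz)$, $D^{Err}_i(\vz) = 1$, and $D^{Err}_j(\vz) = 0$ all hold. I would construct a competitor policy $\vD'$ that agrees with $\vD^{Err}$ off $B$ and swaps coordinates $i$ and $j$ on $B$ (setting $D'_i = 0,\ D'_j = 1$ there). The crucial structural observation is that this swap preserves $R(\vD(\vz)) = \vone^t \vD(\vz)$ pointwise, so $R$ is invariant under the exchange.

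Once $R$ is pinned, the pointwise change in $V(\vD)$ is simply $\II(\vZ \in B)(h_i - h_j)$. Applying the tower property together with the defining locFDR identity $\mE[h_k \mid \vZ] = 1 - T_k(\vZ)$ converts the expected change into $\mE[\II(\vZ \in B)(T_j(\vZ) - T_i(\vZ))]$, which is strictly negative by the definition of $B$. Hence $\mE V$ strictly decreases, so the power $\Pi = \mE R - \mE V$ strictly increases. For the error constraints, note that $D^{Err}_i = 1$ on $B$ forces $R \geq 1$ there, so $V/\max(R,1)$ (for FDR) and $V/R$ restricted to $\{R>0\}$ (for pFDR, where both $R$ and $\textrm{Pr}(R>0)$ are unchanged by the swap) are well-defined, and the same tower step shows that $FDR(\vD') - FDR(\vD^{Err})$ and $pFDR(\vD') - pFDR(\vD^{Err})$ both equal $\mE[\II(\vZ \in B)(T_j - T_i)/R]$ up to a positive normalizing factor, which is again strictly negative. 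Thus $\vD'$ satisfies the constraint with strictly higher power, contradicting optimality.

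The main obstacle I anticipate is the nonlinearity of FDR and pFDR in $V$ and $R$: naive variational perturbations do not split cleanly into integrals of simple quantities, which makes standard first-order optimality arguments awkward. My construction neutralizes this entirely by choosing the swap to leave $R$ exactly invariant pointwise, so every relevant change is linear in $V$ and directly controlled by the locFDR identity $\mE[h_k \mid \vZ] = 1 - T_k(\vZ)$. The reverse orientation of the monotonicity follows by the same argument with the roles of $i$ and $j$ interchanged.
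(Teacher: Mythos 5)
Your proposal is correct and takes essentially the same route as the paper's own proof: a coordinate swap of $i$ and $j$ on the violating set (the paper's $A_{ij}$), exploiting the pointwise invariance of $R=\vone^t\vD(\vz)$ to reduce the change in power, FDR and pFDR to the linear quantity $\mE\left[\II(\vZ\in B)\,(T_j(\vZ)-T_i(\vZ))/R\right]$ via the identity $\mE[h_k\mid\vZ]=1-T_k(\vZ)$. The only cosmetic difference is that you argue by contradiction with optimality, whereas the paper presents the swap as an iterative improvement of an arbitrary candidate policy until all sets $A_{ij}$ have probability zero.
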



The proof of this theorem is also given in Appendix  \ref{app-proofs}. 

In the next section we shall show that the threshold for rejection with optimal FDR or pFDR control depends on all the realized statistics. This is in contrast to the threshold for optimal mFDR control, which is fixed. More specifically, we shall show that the OMT policies with FDR or pFDR control are step-down procedures , in contrast with the single step procedure for controlling the mFDR optimally (see, e.g.,  \citealt{LehmannRomano05}, for the distinction between step-down, step-up, and single step procedures).

\section{Optimal procedures for FDR or pFDR control in the general two-group model
}\label{sec-OMT-FDR-pFDR}

Given the  selected power measure, the expected number of true positive findings, and false discovery measure to control
$Err\in \{FDR, pFDR \}$, we can write the OMT problem as an infinite dimensional integer program,

\begin{eqnarray}
\max_{\vD : \mR^K \rightarrow \{0,1\}^K}  && \E(\vh^t \vD) \label{prob}\\
\mbox{s.t. }&& Err(\vD)  \leq \alpha \label{const}.
\end{eqnarray}

Let $\PP(\vz) = \sum_{\vh}g(\vz\mid \vh)\pi^{\vone^t \vh}(1-\pi)^{K-\vone^t \vh}$ denote the joint distribution of the test statistics.

The objective is linear in $\vec D$: 
\begin{eqnarray}
\E(\vh^t \vD) && = \int_{\mR^K} \sum_{\vh}\left\lbrace \sum_{i=1}^K D_i(\vz) h_ig(\vz\mid \vh)\pi^{\vone^t \vh}(1-\pi)^{K-\vone^t \vh}\right\rbrace  d\vz \nonumber\\
&& = \int_{\mR^K} \sum_{i=1}^K D_i(\vz)\left\lbrace \sum_{\vh} h_i g(\vz\mid \vh)\pi^{\vone^t \vh}(1-\pi)^{K-\vone^t \vh} \right\rbrace d\vz \nonumber\\
&& = \int_{\mR^K} \sum_{i=1}^K D_i(\vz) \lbrace 1-T_i(\vz) \rbrace \PP(\vz) d\vz \nonumber
\end{eqnarray}
where the last inequality follows since $\sum_{\vh} h_i g(\vz\mid \vh)\pi^{\vone^t \vh}(1-\pi)^{K-\vone^t \vh} = \textrm{Pr} (h_i=1 \mid \vz) \PP(\vz)$ and  $ \textrm{Pr} (h_i=1 \mid \vz)=1-T_i(\vz)$ from the locFDR definition.


The constraint can also be expressed in terms of the locFDR values and $\PP(\vz)$:
\begin{eqnarray}
&&FDR(\vD) =  \int_{\mR^K} \sum_{\vh} \frac{(\vone^t-\vh^t) \vD(\vz)} {{\vone}^t \vD(\vz)} g(\vz\mid \vh)\pi^{\vone^t \vh}(1-\pi)^{K-\vone^t \vh}   d\vz \nonumber \\
&& =
\int_{\mR^K}  \sum_{i=1}^K\frac{D_i(\vz)} {{\vone}^t \vD(\vz)}\sum_{\vh} (1-h_i) g(\vz\mid \vh)\pi^{\vone^t \vh}(1-\pi)^{K-\vone^t \vh}   d\vz
\nonumber \\
&& =\int_{\mR^K}  \sum_{i=1}^K\frac{D_i(\vz)} {{\vone}^t \vD(\vz)}T_i(\vz)  \PP(\vz)    d\vz
\leq \alpha, \label{const-FDR} \\
&& pFDR(\vD) = \frac{FDR(\vD)}{\int_{\mR^K}  \II \{\vec 1 ^t \vD(\vz)>0 \}  \PP(\vz) d\vz} \label{const-pFDR}\leq \alpha,
\end{eqnarray}
To simplify the notation, we employ in our FDR calculations the convention $0/0=0$.

Denote by $\vD^*$ an optimal solution of this problem. 
As written, this is an  infinite integer program, with an objective that is linear in $\vD$ but a constraint which is a non-linear function of $\vD$. In this section, we prove that:
\begin{enumerate}
\item The optimal solution has a structure which allows us to write the constraint as a linear functional of $\vD$ (using Thm. \ref{thm-main} above).
\item Once the problem is written in this linear fashion, the infinite linear program relaxation of the infinite integer problem is guaranteed to have a solution that is integer almost everywhere 
(Lemma  \ref{lem2}).
\item This infinite linear program is guaranteed to have zero duality gap, and hence its solution can be found by solving the Euler-Lagrange conditions, and a solution to these can be found via one-dimensional search (Lemma  \ref{lem3}). 
\end{enumerate} 
Taken together, these results establish a practical methodology to solve the general two-group FDR or pFDR control problem. 
In the next section, we discuss the algorithmic and computational aspects, establishing that this problem can be practically solved for high dimensional settings for the i.i.d two-group model, and in some important cases also for general two-group settings with dependence, yielding the optimal FDR or pFDR controlling policy. 
 

We note that Lemmas  \ref{lem2} and \ref{lem3} are similar in nature, and employ similar techniques, to results in our previous work on multiple testing under strong control \citep{Rosset18}, although some of the important details differ. A major  difference in details is  the fact  our decision rule is not necessarily symmetric in $\vz$ when the data is dependent. Another important distinction is that  
the infinite linear program  in this work has only a single error constraint and can be solved practically for large  $K$, whereas in \cite{Rosset18} it has $K$ error constraints and  can be solved only for a very low dimension $K$.





Theorem \ref{thm-main} demonstrates that for every $\vz$ the optimal policy rejects the set of hypotheses with the smallest locFDR, up to a threshold: $D^*_i(\vz) = 1 \Leftrightarrow T_i(\vz) \leq t(\vz).$

With this characterization of the optimal solution, we can rewrite the constraint in Problem (\ref{prob},\ref{const}) so it is linear in $\vD$. To simplify notation, we replace $\vD$ with a version $\tD$ which operates on the sorted locFDR in increasing order. Explicitly, given $\vz,$ let $i_1,\ldots,i_K$ be the sorting permutation, so $T_{i_1}(\vz)\leq T_{i_2}(\vz)\leq\ldots\leq T_{i_K}(\vz),$ then we define $\tD_k(\vz) = D_{i_k}(\vz).$ Given the characterization of $D^*$ above, then for every $\vz$ we can find $k^*(\vz)$ such that $ \tD^*_k (\vz) = 1 \Leftrightarrow k\leq k^*(\vz).$

We can therefore write:
\begin{equation}\label{const-lin-FDR}
FDR(\tD) =  \int_{\mR^K} \PP(\vz)\left[\tD_1(\vz)T_{(1)}(\vz)+ \sum_{k=2}^K \tD_i(\vz)  \frac{1}{k} \left(T_{(k)}(\vz) - \bar{T}_{k-1} (\vz) \right)\right]d\vz, 
\end{equation}
where  $\bar{T}_{k-1}(\vz)=\frac{\sum_{l=1}^{k-1}  T_{(l)}(\vz)}{k-1}$, and $T_{(k)}(\vz)$ denotes order statistic, i.e. the $k$th smallest locFDR value.
 See Appendix \ref{app-derFDR} for details of the  derivation of the formulation (\ref{const-lin-FDR}).  Using this representation, the pFDR constraint in (\ref{const-pFDR}) also has a linear representation:
\begin{equation}\label{const-lin-pFDR}
 FDR(\tD) - \textrm{Pr}(R>0)\alpha = FDR(\tD)- \int_{\mR^K} \PP(\vz)\tD_1(\vz) \alpha d\vz \leq 0
\end{equation}

To emphasize the linearity of the objective and constraints, and simplify the followup, we rewrite our formulation in a generic form: 
\begin{eqnarray}\label{prob-genericform}
\max_{\tD : \mR^K \rightarrow \{0,1\}^K}  && \int_{\mR^K} \PP(\vz) \sum_{k=1}^K \tD_k(\vz) a_k(\vz)  d\vz \label{prob1}\\
\mbox{s.t. }&& \int_{\mR^K} \PP(\vz) \sum_{k=1}^K \tD_k(\vz) b_k(\vz) d\vz \leq c_{Err}, \nonumber\\
&&\tD_1(\vz)\geq \tD_2(\vz)\geq \ldots \geq \tD_K(\vz),\; \forall \vz \in \mR^K \nonumber, 
\end{eqnarray}
where $a_k, b_k, k=1,\ldots, K$ are functions of 
the locFDR order statistics, and $c_{Err}$ is a fixed constant. Specifically, for $Err(\tD) = FDR(\tD)$: $a_k(\vz)=1-T_{(k)}(\vz), k=1, \ldots,K$; 
 $b_k(\vz)= \left(T_{(k)}(\vz) - \bar{T}_{k-1} (\vz) \right)/k, k=2,\ldots,K$; $b_1(\vz) = T_{(1)}(\vz)$; $c_{Err} =c_{FDR}= \alpha$. For $Err(\tD) = pFDR(\tD)$, the only differences are that $b_1(\vz) = T_{(1)}(\vz)-\alpha$ and  $c_{Err} =c_{pFDR}= 0$.

We now consider the relaxed linear program without the integer requirement on $\tD$, by writing the same problem, except optimizing over $\tD(\vz) \in [0,1]^K$:
\begin{eqnarray}
\max_{\vD : \mR^K \rightarrow [0,1]^K}  && \int_{\mR^K} \PP(\vz) \sum_{k=1}^K \tD_k(\vz) a_k(\vz)  d\vz \label{prob-lin}\\
\mbox{s.t. }&& \int_{\mR^K} \PP(\vz) \sum_{k=1}^K \tD_k(\vz) b_k(\vz) d\vz \leq c_{Err} \nonumber\\
&&\tD_1(\vz)\geq \tD_2(\vz)\geq \ldots \geq \tD_K(\vz),\; \forall \vz \in \mR^K \nonumber. 
\end{eqnarray}

To analyze this problem, we consider its Euler-Lagrange (EL) necessary optimality conditions \citep{korn2000mathematical}. 
We derive the EL conditions for this problem in Appendix \ref{app-proofs}, and also show there that they can be rephrased as requiring the following to hold almost everywhere for optimality, in addition to the (primal feasibility) constraints of Problem (\ref{prob-lin}):
\begin{eqnarray}
&&a_k(\vz) - \mu b_{k}(\vz)-\lambda_{k}(\vz)+\lambda_{k+1}(\vz) = 0,\;\;\forall \vz \in \mR^K, k=1,\ldots,K. \label{KKT-stat}\\
&& \mu\left\lbrace \int_{\mR^K} \left(\sum_{k=1}^K b_{k}(\vz)\tD_k(\vz)\right) \PP(\vz) d\vz-\alpha\right\rbrace=0,\label{KKT-CS1} \\
&&\lambda_{K+1}(\vz)\tD_K(\vz) = 0 \ \forall \vz\in \mR^K  \label{KKT-CS2}\\
&&\lambda_j(\vz) (\tD_{j -1}(\vz)-\tD_j(\vz)) = 0\;,\;\forall \vz \in \mR^K,\;j=2,\ldots,K \label{KKT-CS3}\\
&&\lambda_1(\vz) (\tD_1(\vz)-1 ) = 0\;,\;\forall \vz \in \mR^K\label{KKT-CS4},
\end{eqnarray}
where $\mu$ and $\lambda_j(\vz),\;j=1,\ldots,K+1,\;\vz\in \mR^K$ are non-negative Lagrange multiplies. In analogy to the Karush-Kuhn-Tucker (KKT) conditions in finite convex optimization, we can term condition (\ref{KKT-stat}) the {\em stationarity} condition, and conditions (\ref{KKT-CS1}--\ref{KKT-CS4}) the {\em complementary slackness} conditions.

The following result clarifies that for this problem, we can solve the linear program relaxation instead of the integer program, and get an integer solution:
\begin{lem} \label{lem2}

\newcommand{\vv}{\ensuremath{\vec v}}
For $K$ test statistics drawn from the general two-group model, assume the following non-redundancy condition:
\begin{equation}
\forall \vv\in \mR^K,\; \vv\neq 0,\;\;\mP\left(\sum_k v_k T_k(\vec Z) = 0 \right) = 0.\label{eq:nonred}
\end{equation}
Then any solution to the EL conditions \eqref{KKT-stat}--\eqref{KKT-CS4}  is integer almost everywhere on $\mR^K$. 
\end{lem}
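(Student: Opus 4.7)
The plan is to argue by contradiction. Suppose $\tD^*$ solves the EL conditions \eqref{KKT-stat}--\eqref{KKT-CS4} but is non-integer on a set $S \subseteq \mR^K$ of positive Lebesgue measure. Since each $\tD^*(\vz)$ is weakly decreasing in $[0,1]^K$ by primal feasibility, every $\vz \in S$ admits a maximal ``fractional plateau'' $[i(\vz),j(\vz)]$ on which $\tD^*_k(\vz)$ takes a common value $v(\vz) \in (0,1)$. The pair $(i,j)$ ranges over a finite set, and the sorting permutation $\sigma$ arranging the $K$ locFDRs in increasing order takes at most $K!$ values (and is well defined almost surely since \eqref{eq:nonred} applied to differences of unit vectors rules out ties). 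Partitioning by $(i,j,\sigma)$, I extract a positive-measure subset $S' \subseteq S$ on which this triple is constant.

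On $S'$, complementary slackness forces the boundary Lagrange multipliers to vanish. Maximality of the plateau gives $\tD^*_{i-1}(\vz) > v(\vz)$ (or $i=1$ with $v(\vz)<1$) and $\tD^*_{j+1}(\vz) < v(\vz)$ (or $j=K$ with $v(\vz)>0$), so condition \eqref{KKT-CS3} (respectively \eqref{KKT-CS4}) yields $\lambda_i(\vz) = 0$ (respectively $\lambda_1(\vz) = 0$), and condition \eqref{KKT-CS3} (respectively \eqref{KKT-CS2}) yields $\lambda_{j+1}(\vz)=0$ (respectively $\lambda_{K+1}(\vz)=0$). Summing the stationarity relation \eqref{KKT-stat} over $k = i, \ldots, j$ then telescopes the Lagrange multipliers to leave
\begin{equation*}
\sum_{k=i}^{j} \bigl[a_k(\vz) - \mu\,b_k(\vz)\bigr] \;=\; \lambda_i(\vz) - \lambda_{j+1}(\vz) \;=\; 0, \qquad \vz \in S'.
\end{equation*}

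Substituting $a_k(\vz) = 1 - T_{(k)}(\vz)$ and the explicit formula for $b_k(\vz)$, this reduces to an affine identity of the form $\sum_{k=1}^{K} c_k\,T_{(k)}(\vz) = j - i + 1$ (with the constant shifted by $\mu\alpha$ when $i=1$ in the pFDR case). A direct computation shows the coefficient of $T_{(j)}$ on the left equals $-(1+\mu/j)$, nonzero for every $\mu \geq 0$, so the linear part is non-trivial. Relabeling via the fixed permutation $\sigma$ on $S'$ converts this into a non-trivial affine equation $\sum_{l=1}^{K} w_l\,T_l(\vz) = j - i + 1$ with $\vec w \neq \vec 0$ holding for every $\vz \in S'$. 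Invoking \eqref{eq:nonred} to force the event $\{\sum_l w_l T_l(\vZ) = j - i + 1\}$ to have probability zero contradicts $\mP(S') > 0$, and hence $\tD^*$ is integer almost everywhere.

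The main obstacle is the ``affine versus linear'' gap in the last step: as literally written, \eqref{eq:nonred} precludes atoms only at zero for linear combinations of $T_l(\vZ)$, whereas the derived equation equals a nonzero constant $j-i+1$. The cleanest resolution is to read \eqref{eq:nonred} in its natural affine sense, namely that the distribution of $\sum_l w_l T_l(\vZ)$ is atomless for every $\vec w \neq \vec 0$; this is the intended meaning in context and is consistent with the absolute-continuity settings invoked elsewhere in the paper. A secondary technical point is careful bookkeeping of the boundary cases $i=1$ and $j=K$, where the plateau touches the bounds $\tD_1 \leq 1$ or $\tD_K \geq 0$ and the appropriate complementary-slackness condition must be used to conclude that the boundary multiplier vanishes.
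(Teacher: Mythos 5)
Your proof follows essentially the same route as the paper's: locate two indices where complementary slackness forces the Lagrange multipliers to vanish, telescope the stationarity condition between them, and conclude that a non-trivial combination of locFDRs satisfies an exact equation, which the non-redundancy condition rules out on a set of positive measure. You are, however, more careful than the paper on three points that it glosses over: the partition into positive-measure sets on which the plateau endpoints and the sorting permutation are constant (needed to pass from a pointwise identity in the order statistics $T_{(k)}$ to a fixed linear relation in the raw $T_l$, since \eqref{eq:nonred} is stated for the latter); the explicit check that the coefficient of $T_{(j)}$ is $-(1+\mu/j)\neq 0$, so the linear part is genuinely non-trivial; and the observation that the telescoped identity is affine with nonzero constant $j-i+1$, whereas \eqref{eq:nonred} as literally written only excludes an atom at zero. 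That last point is a real imprecision in the paper (which simply asserts the summed expression ``is a continuous random variable''), and your proposed fix --- reading the condition as atomlessness of $\sum_l w_l T_l(\vec Z)$ for all $\vec w\neq 0$ --- is the intended interpretation and the correct repair.
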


Note that the non-redundancy condition Eq. (\ref{eq:nonred}) is very mild, as it is satisfied whenever the distribution of $\vec Z$ is continuous and $T_k(\vec Z)$ are non-linear functions, which is the case in all standard applications.
   
Our next result shows that for our problem, the EL conditions are in fact not only necessary, but also sufficient (like the KKT conditions in finite linear programs), and we can thus find the infinite linear program solution by finding any solution that complies with these conditions.
\begin{lem} \label{lem3}
The infinite linear program (\ref{prob-lin}) has zero duality gap, and therefore the conditions \eqref{KKT-stat}--\eqref{KKT-CS4} together with primal feasibility are also sufficient, and a solution complying with these conditions is optimal.
\end{lem}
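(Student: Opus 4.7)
The plan is to establish zero duality gap by the standard \emph{saddle-point} route: write the Lagrangian for the infinite LP (\ref{prob-lin}) with multiplier $\mu\geq 0$ for the single scalar error constraint and pointwise multipliers $\lambda_1(\vz),\ldots,\lambda_{K+1}(\vz)\geq 0$ for the ordering/box constraints $\tD_1(\vz)\leq 1,\;\tD_{j-1}(\vz)\geq \tD_j(\vz)\;(j=2,\ldots,K),\;\tD_K(\vz)\geq 0$, and then exhibit a primal-dual pair whose objective values coincide. The starting point is weak duality, which holds on general grounds for infinite LPs: for any primal-feasible $\tD$ and any dual-feasible $(\mu,\lambda)$, since the multipliers are non-negative and the corresponding constraint slacks are non-positive, the primal objective at $\tD$ is bounded above by $L(\tD,\mu,\lambda)$, and hence by $\sup_{\tilde\tD} L(\tilde\tD,\mu,\lambda)$, the dual value at $(\mu,\lambda)$.

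The key step is then to take any pair $(\tD,\mu,\lambda)$ satisfying the EL system \eqref{KKT-stat}--\eqref{KKT-CS4} together with primal feasibility and to show that the primal objective at $\tD$ equals the dual objective at $(\mu,\lambda)$. Using the stationarity identity \eqref{KKT-stat}, I would substitute $a_k(\vz)=\mu b_k(\vz)+\lambda_k(\vz)-\lambda_{k+1}(\vz)$ into the primal objective and split the result as
\begin{equation*}
\int_{\mR^K}\PP(\vz)\sum_{k=1}^{K}\tD_k(\vz)a_k(\vz)\,d\vz
=\mu\!\int_{\mR^K}\PP(\vz)\sum_{k=1}^{K}\tD_k(\vz)b_k(\vz)\,d\vz
+\int_{\mR^K}\sum_{k=1}^{K}\tD_k(\vz)\bigl(\lambda_k(\vz)-\lambda_{k+1}(\vz)\bigr)d\vz.
\end{equation*}
The first integral equals $\mu c_{Err}$ by the complementary slackness condition \eqref{KKT-CS1}. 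For the second, Abel summation gives
\begin{equation*}
\sum_{k=1}^{K}\tD_k(\vz)\bigl(\lambda_k(\vz)-\lambda_{k+1}(\vz)\bigr)
=\lambda_1(\vz)\tD_1(\vz)+\sum_{k=2}^{K}\lambda_k(\vz)\bigl(\tD_k(\vz)-\tD_{k-1}(\vz)\bigr)-\lambda_{K+1}(\vz)\tD_K(\vz),
\end{equation*}
and by the complementary slackness conditions \eqref{KKT-CS2}, \eqref{KKT-CS3}, \eqref{KKT-CS4} each of the last two terms vanishes and $\lambda_1(\vz)\tD_1(\vz)=\lambda_1(\vz)$, so the expression collapses to $\int\lambda_1(\vz)d\vz$. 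Hence the primal value at $\tD$ equals $\mu c_{Err}+\int\lambda_1(\vz)d\vz$.

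I would then identify this quantity as the dual objective at $(\mu,\lambda)$: by stationarity the Lagrangian is constant in $\tD$ (its pointwise $\tD$-gradient vanishes), so $\sup_{\tilde\tD}L(\tilde\tD,\mu,\lambda)=L(\tD,\mu,\lambda)=\mu c_{Err}+\int\lambda_1(\vz)d\vz$. Combined with weak duality, the two bounds squeeze and yield equal primal and dual optima, so the duality gap is zero and both $\tD$ and $(\mu,\lambda)$ are optimal for their respective problems. Sufficiency of \eqref{KKT-stat}--\eqref{KKT-CS4} with primal feasibility is then an immediate consequence of this saddle-point certification.

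The main obstacle is not the algebra above but the rigorous handling of the infinite-dimensional Lagrangian: one must be careful about the normalization of the multipliers $\lambda_j$ relative to the factor $\PP(\vz)$ (so that \eqref{KKT-stat} is read in the density-weighted sense), about measurability and integrability of $\lambda_1$, and about invoking weak duality for infinite LPs rather than the finite-dimensional KKT theorem. Much of this machinery is analogous to what was developed in \cite{Rosset18}, and I would cite and adapt those arguments, with the simplification that here there is only a single scalar error constraint, so no vector-valued dual variable appears in the objective integrand.
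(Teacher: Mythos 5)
Your proof is correct and follows essentially the same route as the paper's: both construct the dual value $\mu c_{Err} + \int \lambda_1(\vz)\,d\vz$ from the EL multipliers and show it coincides with the primal objective via the telescoping identity enforced by stationarity and the complementary slackness conditions \eqref{KKT-CS1}--\eqref{KKT-CS4}. The only cosmetic difference is that the paper writes the dual LP explicitly and solves for $\lambda_1^*(\vz)=\sum_{j}\tD^*_j(\vz)\bigl(a_j(\vz)-\mu^* b_j(\vz)\bigr)$ from the step-down structure of the solution, whereas you arrive at the same identity by Abel summation inside the primal objective.
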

For brevity, we defer explicit derivation of the dual together with the proof to Appendix \ref{app-proofs}. 

Putting our results together, we obtain our an explicit characterization of the OMT solution to our problems of interest: 
\begin{thm} 
For $K$ test statistics drawn from the general two-group model,
an optimal solution to Problem (\ref{prob},\ref{const}) can be found by solving the EL conditions  \eqref{KKT-stat}--\eqref{KKT-CS4} together with primal feasibility of the infinite linear program (\ref{prob-lin}).
\end{thm}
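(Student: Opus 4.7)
The plan is to assemble the three preceding results—Theorem~\ref{thm-main}, Lemma~\ref{lem2}, and Lemma~\ref{lem3}—into a single chain that connects a solution of the EL system back to an optimizer of the original infinite integer program (\ref{prob},\ref{const}). The conceptual picture is a three-step ladder: (i) Theorem~\ref{thm-main} collapses the original $\{0,1\}^K$-valued decision rule $\vD$ into the sorted form $\tD$ with the monotonicity constraint $\tD_1\geq\cdots\geq\tD_K$, turning Problem~(\ref{prob},\ref{const}) into the generic integer program (\ref{prob-genericform}); (ii) relaxing the integrality yields the infinite LP (\ref{prob-lin}), which by Lemma~\ref{lem3} has zero duality gap, so any primal-feasible point satisfying the EL conditions \eqref{KKT-stat}--\eqref{KKT-CS4} is optimal for the LP relaxation; (iii) by Lemma~\ref{lem2}, under the mild non-redundancy assumption \eqref{eq:nonred}, every such EL solution is integer almost everywhere.

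Concretely, I would argue as follows. First, let $\tilde{\vD}^\star$ be any solution of the EL conditions \eqref{KKT-stat}--\eqref{KKT-CS4} together with primal feasibility of (\ref{prob-lin}). By Lemma~\ref{lem3}, $\tilde{\vD}^\star$ is an optimizer of the relaxed LP (\ref{prob-lin}). By Lemma~\ref{lem2}, $\tilde{\vD}^\star(\vz)\in\{0,1\}^K$ for almost every $\vz\in\mR^K$; modifying $\tilde{\vD}^\star$ on the Lebesgue-null exceptional set (which does not affect the values of the objective or the FDR/pFDR functionals since $\PP$ is absolutely continuous by the continuity of $\vec Z$) produces an integer-valued decision rule with the same objective value. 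Since the LP relaxation is an outer relaxation of (\ref{prob-genericform}), its optimal value upper-bounds the integer optimum; our integer-valued $\tilde{\vD}^\star$ attains that upper bound and is feasible for (\ref{prob-genericform}), so it is optimal there as well.

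Finally, I would translate the sorted-decision optimizer $\tilde{\vD}^\star$ back into an optimizer $\vD^\star$ of the original problem (\ref{prob},\ref{const}) via the coordinate relabelling $D_{i_k}^\star(\vz) = \tilde{D}_k^\star(\vz)$ used above Eq.~(\ref{const-lin-FDR}), observing that this relabelling is a bijection on $\{0,1\}^K$-valued policies and preserves both the objective $\E(\vh^t\vD)$ and the constraint $Err(\vD)\leq\alpha$. Theorem~\ref{thm-main} guarantees that the monotonicity restriction $\tD_1\geq\cdots\geq\tD_K$ loses no generality, so $\vD^\star$ is optimal in the full class $\{\vD:\mR^K\to\{0,1\}^K\}$ under the FDR or pFDR constraint.

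I do not expect a genuine obstacle here, since the heavy lifting has already been done in Theorem~\ref{thm-main} and Lemmas~\ref{lem2}--\ref{lem3}; the only points that require minor care are (a) handling the measure-zero set on which an EL solution might fail to be integer, and (b) verifying that the reduction via sorting in Theorem~\ref{thm-main} is indeed without loss of generality for both the FDR and pFDR constraints, which both depend on $\tD$ only through the sum $\vone^t\tD$ and the ordered locFDR values. Both points are routine given the setup.
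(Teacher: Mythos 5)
Your proposal is correct and follows essentially the same route as the paper, which presents this theorem as a direct assembly of Theorem~\ref{thm-main} (reduction to the sorted, monotone form), Lemma~\ref{lem3} (zero duality gap, so EL conditions plus primal feasibility imply LP optimality), and Lemma~\ref{lem2} (integrality almost everywhere, so the LP optimum solves the integer problem). The only detail worth noting is that the integrality step tacitly requires the non-redundancy condition~\eqref{eq:nonred} from Lemma~\ref{lem2}, which you correctly flag but which the theorem statement itself leaves implicit.
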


We next show how this can be used to efficiently solve high-dimensional multiple testing problem with FDR or pFDR control for the two-group model.

\section{Algorithm}\label{sec-algorithm}
We first characterize a generic algorithm 
to solve the OMT problem with FDR or pFDR control. We then show how to 
efficiently implement this approach for high dimensional instances of the problem.

Given a candidate Lagrange multiplier $\mu\geq 0$, and an efficient method for calculating locFDR values $T_k(\vz)$, for $k=1,\ldots,K$, define: 
$ R_k(\vz) = a_k(\vz) - \mu b_{k}(\vz) $. For $a_k(\vz)$ and $b_k(\vz)$  defined for the FDR and pFDR constraints, $ R_k(\vz)$ is as follows: 
\begin{eqnarray*}
&&R_1(\vz) = \begin{cases}
 1 - T_{(1)}(\vz) - \mu T_{(1)}(\vz) & \text{if } Err = FDR,\\
1 - T_{(1)}(\vz) - \mu (T_{(1)}(\vz)-\alpha) & \text{if } Err = pFDR.\\
\end{cases}\\
&& R_k(\vz) = a_k(\vz) - \mu b_{k}(\vz) = 1 - T_{(k)}(\vz) - \frac{\mu}{k} (T_{(k)}(\vz) - \bar{T}_{k-1}(\vz)) \text{ for } k=2,\ldots,K.
\end{eqnarray*}

Denote by $\tD^\mu(\vz)$ a solution which complies with \eqref{KKT-stat} and  \eqref{KKT-CS2}--\eqref{KKT-CS4} for this value of $\mu$. It is easy to confirm that this dictates that almost surely:
\begin{eqnarray}
\tD^\mu_1(\vz) &=&  \II\left\{\cup_{l=1}^K \left(\sum_{k=1}^l R_k(\vz) > 0\right)\right\} \label{Dmu1}\\
\tD^\mu_i(\vz) &=&  \II\left\{\left(\tD^\mu_{i-1}(\vz)=1\right) \cap \cup_{l=i}^K \left(\sum_{k=i}^l R_k(\vz) > 0\right)\right\},\; i=2,\ldots,K. \label{Dmu2}
\end{eqnarray}
 
Now we have to ensure that primal feasibility and complementary slackness for $\mu$ hold, in other words find $\mu^* \geq 0$ such that the following holds:
\begin{equation} 
\int_{\mR^K} \PP(\vz) \sum_{k=1}^K \tD_k(\vz) b_k(\vz) d\vz = c_{Err}. \label{mu-cond}
\end{equation} 
It is easy to confirm that if we find such a solution, then it is feasible, it complies with conditions \eqref{KKT-stat}--\eqref{KKT-CS4}, and it is obviously binary.
Thus, finding the optimal solution amounts to searching the one-dimensional space of $\mu$ values for a solution of Eq. \eqref{mu-cond}, using the characterization in Eqs. \eqref{Dmu1}, \eqref{Dmu2}. 

When naively implemented, the calculation in Eqs. \eqref{Dmu1},\eqref{Dmu2} requires $O(K^2)$ operations to calculate all partial sums. However we can rephrase it using a recursive representation to require only $O(K)$ calculations. We first calculate, in decreasing order: 
\begin{equation*}
m_{K}(\vz) = \max(0,R_K(\vz))\;,\;\;\;
m_k(\vz)  =  \max\left(0, m_{k+1} + R_k(\vz)\right) \;,\;k=K-1,\ldots,1,
\end{equation*}
and then, in increasing order: 
\begin{equation*}
\tD^\mu_1 = \II \left\{m_1 > 0\right\}\;,\;\;\;\tD^\mu_k = \II \left\{\left(\tD^\mu_{k-1}=1\right) \cap m_k > 0\right\}\;,\;k=2,...,K.
\end{equation*}

We see from the algorithm that the OMT procedure with FDR control starts by determining whether the hypothesis with the smallest locFDR can be rejected, and proceeds to decide whether to reject the hypothesis with the second smallest locFDR only if the decision at the first step was to reject (i.e., $D_1^\mu=1$). Proceeding similarly, only if the hypothesis with the $l$th smallest locFDR is rejected, the hypothesis with the $(l+1)$th smallest locFDR is tested, for $l=1,\ldots,K-1$. Thus, it is a step-down procedure \citep{LehmannRomano05}. In contrast, the OMT procedure with mFDR control is a single step procedure since each hypothesis is rejected if its locFDR is less than a common cut-off value.   

Implementing the algorithm allows us to find optimal solutions to two-group FDR problems with many thousands of hypotheses in minutes of CPU, as illustrated in \S~\ref{sec-sim} and \S~\ref{sec-gene-expression}.

\subsection{Calculating locFDR values under dependence}

We first note that under the standard two-group model with i.i.d assumptions, calculating $T_i(\vz),\;i=1,\ldots,K$ requires $O(K)$ calculations, and thus does not increase the complexity of the algorithm above. 

Under general (known) dependence, the calculation of the locFDR involves calculating the terms $g(\vz\mid h_i=0)$ and $g(\vz\mid h_i=1)$ (or simply $g(\vz)$). A naive calculation requires integrating over all $O(2^K)$ possible allocations of the vector $\vh,$ for example: 
$$ g(\vz\mid h_i=0) = \sum_{\vh\in \{0,1\}^K:h_i=0} \pi^{\vone^t\vh}(1-\pi)^{K-\vone^t\vh-1} g(\vz \mid \vh),$$ 
where $g(\vz \mid \vh)$ is the known joint distribution of the test statistics under the configuration $\vh.$ Even assuming the calculation of $g(\vz \mid \vh)$ itself is easy, the summation makes this impossible for large $K.$

We discuss two dependence structures where it is possible to design more efficient algorithms:
\begin{itemize}
    \item {\bf Block dependence:} In this setting, the hypotheses $1,\ldots,K$ are partitioned into $1<L<K$ blocks, with $Z_i \perp Z_j$ if $i,j$ do not belong to the same block. As we show below, in the setting we can calculate the set of locFDRs in complexity that depends exponentially on the size of the biggest block, and only linearly on $K.$ This block structure is often assumed in various applications (REFs), and is the example we study in detail in \S~ \ref{subsec-simdep} below.
    \item {\bf Equi-correlated setting:} This refers to the specific setting where the distributions under both the null and alternative are normal with the same variance, and all hypotheses are dependent with an equal correlation for all pairs. In this setting we can design a highly efficient specialized algorithm that requires only $O(K^3)$ operations to calculate all locFDRs. We present it in \S~\ref{app-equicor}.  
\end{itemize}

For the block dependence setting, assume we have a partition into blocks $B_1,\ldots B_L$ such that $\cup_{l=1}^L B_L = \left\{1,\ldots,K\right\}\;,\;\; B_l \cap B_m = \emptyset\;,\;l\neq m,$
and denote the size of block $l$ by $s_l=|B_l|.$
Then it is easy to see that the joint distributions we are interested in factor due to independence, for example, assume WLOG $i\in B_1,$ then we have
$$ g(\vz\mid h_i=0) = \sum_{\vh_1\in \{0,1\}^{s_1}:h_i=0} \pi^{\vone_1^t\vh_1}(1-\pi)^{s_1-\vone_1^t\vh_1-1} g(\vz_1\mid\vh_1) \cdot \prod_{l=2}^L \left[
\sum_{\vh_{l}\in \{0,1\}^{s_l}} \pi^{\vone_{l}^t\vh_{l}}(1-\pi)^{s_l-\vone_{l}^t\vh_{l}}g(\vz_l\mid\vh_l)\right].$$

The same product as in the above display also appears in $ g(\vz\mid h_i=1)$ and $g(\vz),$ so we can combine them to show that the calculation of the locFDR depends only on its dependence block (still assuming $i\in B_1$):
$$T_i(\vz) = \frac{\sum_{\vh_1\in \{0,1\}^{s_1}:h_i=0} \pi^{\vone_1^t\vh_1}(1-\pi)^{s_1-\vone_1^t\vh_1} g(\vz_1\mid\vh_1)}{\sum_{\vh_1\in \{0,1\}^{s_1}} \pi^{\vone_1^t\vh_1}(1-\pi)^{s_1-\vone_1^t\vh_1} g(\vz_1 \mid\vh_1)}.$$
The denominator can clearly be calculated via $2^{s_1}$ evaluations of $g(\vz_1\mid\vh_1),$ and is fixed for all $i\in B_1.$ A naive calculation of the numerator for all $i\in B_1$ requires $s_1 \times 2^{s_1}$ evaluations. We note that the evaluations can be done only once for each $\vh$ and stored with $O(2^{s_1})$ memory.

Overall, assuming the evaluation of $g(\vz_1\mid\vh_1)$ for a block of size $s_1$ is of complexity $O(s_1^2)$ (as for a multivariate Gaussian with known covariance structure), the total complexity of calculating all locFDRs in a block design with $L$ blocks, each of size $s=K/L$ is $O(K \cdot s \cdot 2^s).$

\section{Numerical Examples}\label{sec-sim}

We compare the performance of the OMT procedure with FDR control (henceforth, OMT-FDR) and the OMT procedure with positive FDR control (henceforth, OMT-pFDR), against two natural competitors: the OMT procedure with mFDR control (henceforth, OMT-mFDR, \citealt{Xie11}), and the oracle BH procedure, which applies the BH procedure assuming the probability of a null hypothesis is known (so the threshold for significance of the $i$th largest $p$-value is $\frac{i\alpha}{K(1-\pi)}$ instead of the BH threshold $\frac{i\alpha}{K}$, \citealt{Benjamini06}). In \S~\ref{subsec-simind} we examine the case that the test statistics are independent; in \S~\ref{subsec-simdep} we examine dependence settings, in which case we also compare the OMT procedures with the misspecified procedures that find the OMT policies assuming the test statistics are independent (termed ind-Err when aimed at Err control, where Err is FDR, pFDR,  or mFDR); in \S~\ref{subsec-simest} we examine the effect of estimating the mixture parameters from the data.  In \S~\ref{subsec-simdep} and \S~\ref{subsec-simest} we also compare with the BH procedure and with the adaptive procedure suggested in \cite{Sun07}, which is computationally simpler and more intuitive than OMT-mFDR, and therefore quite popular   for large scale inference. This procedure, termed here est-mFDR, first orders the estimated marginal locFDRs, $T_{marg,(1)}\leq \ldots \leq T_{marg,(K)}$, and then rejects the $k$ hypotheses with smallest estimated marginal locFDRs, where $k = \max \lbrace i: \frac 1i \sum_{j=1}^i T_{marg,(j)}\leq \alpha \rbrace$.  
All simulations are carried out at the nominal level $\alpha = 0.05$ for the chosen criterion (mFDR, FDR or pFDR).

\subsection{The independent setting}\label{subsec-simind} 
We generate test statistics from the following mixture model: with probability $1-\pi$, $Z$ is $N(0,1)$; with probability $\pi$, $Z$ is $N(\theta,1)$ with $\theta<0$. We fix $K=5000$ hypotheses, and experiment with a range of values for $\pi, \theta$. 

Our results are summarized in Table \ref{tab-fixedsignal}. 
As expected, $FDR\leq pFDR\leq mFDR$. The gain in power with the novel procedures (OMT-FDR, OMT-pFDR) is small when the mFDR of the novel procedures is only slightly above the nominal level. However, when the gain is large, the mFDR of the novel procedures can be large. The mFDR of the OMT-FDR and OMT-pFDR procedures is above 0.16 for $\theta=-1.5$, and the power gain over OMT-mFDR is more than 30\%. It is  above 0.07 for $\theta=-2, \pi=0.1$, and the gain in power is at least 4\%. It is close to the nominal level in the three other settings and the power gain is negligible. The power gain is due to the tendency of FDR and pFDR controlling policies to make very few or very many rejections with nonnegligible probability when the signal is weak or rare, and this erratic behaviour results in high $\mE(V)$ and mFDR. 
Interestingly, when the power gain is large, the FDR of the OMT-mFDR procedure is not much smaller than the nominal level. So the OMT-mFDR has lower power, but approximately the same FDR level, as OMT-FDR. The Oracle BH procedure has FDR level identical to the nominal level, as expected, and its mFDR is only slightly above the nominal level except in the weakest setting with $\pi=0.1$, where it is inflated to be 0.066.   

The last column in Table \ref{tab-fixedsignal} demonstrates clearly where OMT-FDR and OMT-pFDR differ.  In order to control the FDR, the OMT-FDR procedure either makes no rejections, or makes many rejections, when the signal is weak. As a consequence, the false discovery proportion (FDP) is either zero or much higher than the nominal level. This is perhaps an unattractive behavior of the OMT-FDR procedure.  As the signal strengthens, the probability of no rejections decreases for OMT-FDR, and its policy approaches that of OMT-pFDR. Since (arguably) pFDR is a more appropriate error measure to control than FDR for the two-group model, the more attractive OMT-pFDR policy may be preferred over OMT-FDR.

 \begin{table}[ht]
\caption{Results for $K=5000$ $z$-scores generated independently from the two-group model $(1-\pi)\times N(0,1)+\pi\times N(\theta,1)$. For each  $\theta \in \{-2.5, -2.0, -1.5\}$ and  $\pi\in\{0.1,0.3 \}$, we provide the expected number of true positives (TP=$\mE(R-V)$), FDR, pFDR, mFDR,  and probability of no rejection ($\textrm{Pr}(R=0)$), for the four procedures compared. Since $FDR = pFDR\times (1-\textrm{Pr}(R=0))$  column 5 can be  determined from columns 6 and 8. 
When the OMT-FDR policy has $\textrm{Pr}(R>0)=1$, it coincides with the OMT-pFDR policy and therefore the OMT-pFDR line is omitted. TP is bold in the settings where the power advantage of OMT-FDR and OMT-pFDR over the alternatives is non-negligible.  
}\label{tab-fixedsignal}
\centering
\begin{tabular}{|rrr|rrrrr|}
  \hline

$\pi$ & $\theta$ & Procedure & TP  & FDR  & pFDR & mFDR & $\textrm{Pr}(R=0)$ \\ \hline
 0.1& -1.5 & OMT-FDR & {\bf 29.763} & 0.050 & 0.841 & 0.843 & 0.940 \\
   && OMT-pFDR  & {\bf 12.488} & 0.045 & 0.051 & 0.824 & 0.118 \\  
  && OMT-mFDR   & 4.062 & 0.049 & 0.050 & 0.050 & 0.013\\ 
  && Oracle BH  & 6.123 & 0.050 & 0.056 & 0.066 & 0.113  \\  \hline
0.1& -2 & OMT-FDR  & {\bf 60.308} & 0.050 & 0.065 & 0.079 & 0.230 \\
   && OMT-pFDR    & {\bf 59.755} & 0.050 & 0.050 & 0.073 & 0.000 \\  
  && OMT-mFDR   & 56.403 & 0.050 & 0.050 & 0.050 & 0.000 \\ 
  && Oracle BH  & 57.277 & 0.050 & 0.050 & 0.052 & 0.000 \\  \hline
 0.1&  -2.5 & OMT-FDR  & 179.468 & 0.050 & 0.050 & 0.051 & 0.000\\
  && OMT-mFDR   & 178.992 & 0.050 & 0.050 & 0.050 & 0.000\\ 
  && Oracle BH  & 179.346 & 0.050 & 0.050 & 0.050 & 0.000  \\  \hline
0.3& -1.5 & OMT-FDR & {\bf 167.662} & 0.050 & 0.181 & 0.184 & 0.723 \\
   && OMT-pFDR  & {\bf 155.652} & 0.050 & 0.050 & 0.166 & 0.000 \\  
  && OMT-mFDR   &  117.088 & 0.050 & 0.050 & 0.050 & 0.000 \\ 
  && Oracle BH  &  118.419 & 0.050 & 0.050 & 0.051 & 0.000 \\  \hline
0.3& -2 & OMT-FDR  & 500.0330 & 0.0500 & 0.0500 & 0.0504 & 0.0000 \\
  && OMT-mFDR    & 499.3813 & 0.0500 & 0.0500 & 0.0500 & 0.0000\\ 
  && Oracle BH  & 499.7893 & 0.0500 & 0.0500 & 0.0501 & 0.0000\\  \hline
 0.3&  -2.5 & OMT-FDR  & 927.8398 & 0.0500 & 0.0500 & 0.0501 & 0.0000 \\
  && OMT-mFDR    & 927.7303 & 0.0500 & 0.0500 & 0.0500 & 0.0000\\ 
  && Oracle BH   & 927.8105 & 0.0500 & 0.0500 & 0.0501 & 0.0000  \\  \hline
\end{tabular}
\end{table}

\subsection{The dependent setting}\label{subsec-simdep}
We generate the test statistics from the two-group model, with $g(\vz \mid \vh)$ a multivariate normal distribution with mean $\mu \times \vh$ and a block diagonal covariance matrix. Within each block we experiment with a range of values for $\rho$, the symmetric correlation. 

Our results are summarized in Table \ref{tab-dep}. As correlation increases the advantage of incorporating dependence into the rule increases, and the power gain can be vary large. In the two settings where at least in half the blocks the correlation is 0.5, the power increases of OMT-FDR, OMT-pFDR, and OMT-mFDR over ind-FDR, ind-pFDR, and ind-mFDR, respectively, is at least 40\%.  As expected from Proposition \ref{prop2}, ind-mFDR maintains the nominal mFDR level of 5\%.  The procedures in the last three rows are also robust to deviations from independence. However, the misspecified policies that ignore dependence for FDR and pFDR control (ind-FDR and  ind-pFDR) have an inflated error, which is at most  6.1\% in our settings. A comparison of FDR, pFDR and mFDR policies reveals that the power gain of FDR and pFDR policies over the respective mFDR policy is large  when FDP is variable, which is manifested in the high mFDR levels (15\%--19\%). As in the set of simulations with independent test statistics, we find that the variation in FDP is greater with FDR control than with pFDR control policies.

 \begin{table}[ht]
 \centering
 \caption{Results for $K=5000$ $z$-scores generated  from the following general two-group model: 
for each $i$, $h_i\sim Bernoulli(0.3)$ is sampled  independently and  the $z$-score mean is $-1.5\times h_i$. The covariance matrix $\Sigma_h$ is a block diagonal matrix with blocks of size five, diagonal entries $1+0.01\times h_i$ and off-diagonal entries with value $\rho_{b}\in \{0,0.1,0.5\}$ for block $b\in \{1,\ldots, 1000 \}$. We provide  the FDR, pFDR, and  mFDR, as well as the expected number of true positives (TP=$\mE(R-V)$), for: the OMT procedure with Err control (OMT-Err when Err is, respectively, FDR, pFDR, and mFDR); the procedure based on the  marginal local FDR (a sub-optimal test statistic) with Err control (marg-Err); the misspecified procedure that utilizes the OMT policy for Err control under the assumptions that the $z$-scores are independent  (ind-Err);  est-mFDR;  Adaptive-BH; and BH.  TP is bold in the settings where the power  of the OMT procedures that take dependence into account is substantially larger than their marginal counterparts, which  base their decisions on the marginal locFDRs.
 }\label{tab-dep} 
 \begin{tabular}{l|rrrr|rrrr|rrrr|}
&\multicolumn{4}{|c|}{$\rho_b=0.1$}&\multicolumn{4}{|c|}{$\rho_b\in \{ 0.1, 0.5\} $}&\multicolumn{4}{|c|}{$\rho_b=0.5$}\\ 
    & {\small FDR} & {\small pFDR} & {\small mFDR} & TP & {\small FDR} & {\small pFDR} & {\small mFDR}  & TP & {\small FDR} & {\small pFDR}& {\small mFDR} & TP \\ \hline
   
 {\small OMT-FDR} & .049 & .159 & .162 & 169 & .050 & .055 & .059 & {\bf 263} &  .050 & .050 & .051 & {\bf 386} \\ 
 {\small marg-FDR} & .050 & .176 & .179 & 167 & .051 & .178 & .181 & 169 & .051 & .181 & .185 & 169 \\ 
{\small ind-FDR} &  .052 & .177 & .180 & 173 & .056 & .179 & .183 & 185 & .061 & .183 & .187 & 199 \\ 
\hline
{\small   OMT-pFDR} & .051 & .051 & .147 & 166  & .050 & .050 & .058 & {\bf 263} & .050 & .050 & .051 & {\bf 386}\\ 
{\small marg-pFDR} & .050 & .050 & .163 & 158 & .049 & .049 & .164 & 154 & .051 & .051 & .168 & 159 \\ 
{\small ind-pFDR} & .052 & .052 & .163 & 163 & .053 & .053 & .166 & 168 & .059 & .059 & .171 & 183 \\ 
\hline
{\small   OMT-mFDR} & .050 & .050 & .050 & 130 & .050 & .050 & .050 & {\bf 261} & .050 & .050 & .050 & {\bf 386} \\ 
{\small marg-mFDR} & .050 & .050 & .050 & 121 & .050 & .050 & .050 & 121 & .050 & .050 & .050 & 121 \\ 
{\small  ind-mFDR} & .050 & .050 & .050 & 120 & .050 & .050 & .050 & 121 & .050 & .050 & .050 & 121 \\ 
  \hline
{\small  est-mFDR} & .050 & .050 & .050 & 120 & .050 & .050 & .050 & 120 & .050 & .050 & .050 & 120 \\ 
 {\small  adaptive BH} & .050 & .050 & .051 & 122 & .050 & .050 & .052 & 122 & .050 & .050 & .052 & 122 \\ 
 {\small  BH} & .035 & .035 & .037 & 73 & .035 & .035 & .037 & 72 & .035 & .035 & .037 & 72 \\

 \end{tabular}
 \end{table}

\subsection{The effect of estimation of the mixture components in the two-group model}\label{subsec-simest}
In practice, the distributions $g(z|h=1)$, $g(z)$ and the mixture proportion $\pi$ are typically unknown. The estimation of the marginal density of the $z$-scores and of $\pi$ can be difficult, and there are many different approaches. We shall limit our investigation to fitting a bivariate mixture of normals using the R package \emph{mixfdr} available from CRAN \citep{Muralidharan10}. The estimation is done using the EM algorithm with a penalization via a Dirichlet prior on $(1-\pi,\pi)$. Estimation of the fraction of nulls is most conservative if the Dirichlet prior parameters are (1,0). In addition to this prior, we also examined the results with the Dirichlet prior parameters $(1-\hat{\pi}, \hat{\pi})$, where $\hat \pi$ is estimated by the method of \cite{Jin07}, recommended in \cite{Sun07}. 

Our results are summarized in Table \ref{tab-estfixedsignal}. 
As in the known distribution case, est-OMT-FDR has the most power, with est-OMT-pFDR a close second, even though it is no longer a necessary guarantee since the rejection region is computed using the estimated parameters from the data. For example, with $\pi=0.3$ the procedure est-OMT-FDR (which coincides with est-OMT-pFDR) has an FDR (which coincides with pFDR) below the nominal level, and compared with est-mFDR, it rejects few more hypotheses on average  if the non-conservative method is used for estimating the fraction of nulls, and many more hypotheses if the conservative method is used. However, the est-OMT-FDR can have an inflated FDR level when the fraction of nulls is fairly small (making the estimation problem more difficult). This problem is present to a lesser degree with est-OMT-pFDR. With $\pi=0.1$: the procedure est-OMT-FDR has an FDR level of 0.12 if the non-conservative method is used for estimating the fraction of nulls, and 0.06 if the conservative method is used; the procedure est-OMT-pFDR has a pFDR level of 0.11 if the non-conservative method is used for estimating the fraction of nulls, and 0.06 if the conservative method is used. 

 \begin{table}[ht]
\caption{Results for $K=5000$ $z$-scores generated independently from the two-group model $(1-\pi)\times N(0,1)+\pi\times N(-2,1)$. We provide the FDR, pFDR, mFDR,  and expected number of true positives (TP=$\mE(R-V)$), for the estimated OMT procedure with FDR control (est-OMT-FDR), with pFDR control (est-OMT-pFDR), with mFDR control (est-mFDR), and for adaptive BH.  The conservative estimation method uses the default prior $Dirichlet(1,0)$ for $(1-\pi, \pi)$; the non-conservative estimation method uses the estimator of \cite{Jin07}, which was recommended in \cite{Sun07} with supplementary R code. The standard error of the estimated FDR is at most 0.004. The est-OMT-FDR policy has $Pr(R>0)=1$ for every simulated dataset in the last two settings, so it coincides with the est-OMT-pFDR policy and therefore the OMT-pFDR line is omitted.   }\label{tab-estfixedsignal}

\centering
\begin{tabular}{|lll|rrrrr|}
  \hline

$\pi$ &   estimation method  & Procedure & TP  & FDR  & pFDR & mFDR & $\textrm{Pr}(R=0)$ \\ \hline
 0.1&non-conservative & est-OMT-FDR&  113.144 & 0.122 & 0.141 & 0.281 & 0.133 \\
   && est-OMT-pFDR & 103.826 & 0.108 & 0.108 & 0.253 & 0.000\\  
  && est-mFDR  & 49.769 & 0.045 & 0.045 & 0.048 & 0.001  \\ 
  && Adaptive BH  & 56.875 & 0.051 & 0.051 & 0.053 & 0.000 \\  \hline
 0.1 &  conservative & est-OMT-FDR  & 68.100 & 0.060 & 0.060 & 0.066 & 0.008 \\
   && est-OMT-pFDR & 67.740 & 0.059 & 0.059 & 0.066 & 0.000\\
  && est-mFDR  & 47.199 & 0.042 & 0.042 & 0.043 & 0.000\\ 
  && Adaptive BH  & 53.833 & 0.048 & 0.048 & 0.050 & 0.000  \\  \hline
0.3&non-conservative & est-OMT-FDR& 499.887 & 0.049 & 0.049 & 0.050& 0.000 \\
  && est-mFDR   & 491.689 & 0.048 & 0.048 & 0.048  & 0.000  \\ 
  && Adaptive BH  & 495.706 & 0.049 & 0.049 & 0.049 & 0.000 \\  \hline
 0.3 &  conservative & est-OMT-FDR  &   496.375 & 0.049 & 0.049 & 0.049& 0.000\\
  && est-mFDR  &387.820& 0.036 & 0.036 & 0.036 & 0.000\\ 
  && Adaptive BH  &452.535  & 0.043 & 0.043 & 0.043& 0.000  \\  \hline
\end{tabular}

\end{table}

\section{Gene expression data analysis}\label{sec-gene-expression}
We illustrate the utility of the novel procedures in the context of an application to gene expression studies. The goal of gene expression studies is to identify the genes that are associated with a trait of interest. For this purpose, the gene expression and the traits of  individuals are collected.

In this section, we provide our re-analysis of a meta-analysis of gene expression studies described in \cite{Shah16}, using our novel est-OMT-FDR procedure (which coincides with est-OMT-pFDR) and the competitors est-mFDR, adaptive BH, and BH. 
In \S~\ref{app-adeptus-example} we analyzed 20 additional gene-expression studies using the same procedures. Our analyses demonstrate clearly that the novel procedure tends to make the largest number of rejections. Of course, having a larger number of rejections does not guarantee having a larger number of true rejections.   We chose the example from  \cite{Shah16} since it contained  both a discovery and a validation meta-analysis study, so we could combine these to form a set of ``confirmed discoveries". A comparison of the rejections by each method with the confirmed discoveries 
suggests that est-OMT-FDR  has more  true discoveries while still maintaining a low false discovery proportion. All analyses can be reproduced using our code available from \url{https://github.com/ruheller/OMT2GroupModel}. Specifics follow. 

 \cite{Shah16} carried out  a primary meta-analysis of four studies of ulcerative colitis, and they reported the primary meta-analysis $p$-values for up-regulation, and separately, down-regulation, of genes. Using the BH procedure on the meta-analysis $p$-values,  2211 and 1775 genes with higher or lower expression, respectively, in ulcerative colitis compared with healthy controls, were detected. \cite{Shah16} also carried out a follow-up meta-analysis of four additional studies of ulcerative colitis. Finally, they carried out  a replication analysis  that showed a high concordance of the average fold-change and a significant overlap in genes with increased or decreased expression. 

We first transformed the primary meta-analysis $p$-values (based on four studies) aimed at discovering up-regulation, and separately the down-regulation,  to $z$-scores. In order to avoid unbounded values, $z$-scores with a one-sided $p$-value of zero (or very nearly zero) were sampled from $N(-6,1)$, and $z$-scores with a one-sided $p$-value of one (or very nearly one) were sampled from $N(6,1)$. 
 We note that although up-regulation and down-regulation are opposite in terms of effect sizes, the  primary meta-analysis $p$-values are directional and therefore the discoveries of interest are those that correspond to the small primary meta-analysis $p$-values for up-regulation, and separately for down-regulation. Similarly, after conversion to $z$-scores the discoveries of interest are those that correspond to small (negative) $z$-scores for  up-regulation, and separately for down regulation. We term these $z$-scores the discovery $z$-scores, since they are based only on the primary meta-analyses (i.e., excluding the follow-up meta-analyses). 

We assume the discovery $z$ scores are independently generated from a mixture of five normal densities, including  one standard normal density (corresponding to genes with no up or down regulation). 
We estimated the mixture components using the R package \emph{mixfdr} by \cite{Muralidharan10}.  The estimation carried out  uses the EM algorithm with a penalization via a Dirichlet prior with a parameter value of one for standard normal component and of zero for the remaining four mixture components.  We denote by $(\hat \mu_i, \hat \sigma_i)$  the estimated mean and standard deviation for normal component, and by  $\hat \pi_{\mu_i}$  the estimated probability the $z$-score is sampled from $N(\hat \mu_i, \ \sigma_i)$, for $i\in \{1,2,3,4\}$. For each of the two analyses we carried out (for discovering up-regulation,  and separately, down regulation), the estimated mixture density had two negative means, $(\hat \mu_3,\hat \mu_4$), and two positive means, ($\hat \mu_1,\hat \mu_2$).

In this example, the null hypothesis is a compound null, that the $z$-score has expectation at least zero.  Since $\sum_{i=1}^4\hat \pi_{\mu_i}+\hat \pi_0=1$,  the probability of the discovery $z$-scores coming from a nonnull hypothesis is estimated to be $\hat \pi = \hat \pi_{\mu_3}+\hat \pi_{\mu_4}$. 
The estimated null and alternative distributions are therefore $\hat g(z\mid h=0) = \frac{\hat \pi_0}{1-\hat \pi}\phi(z)+\frac{\hat \pi_{ \mu_1}}{1-\hat \pi}\frac1{\hat \sigma_1}\phi\left(\frac{z-\hat \mu_1}{\hat \sigma_1}\right)+\frac{\hat \pi_{ \mu_2}}{1-\hat \pi}\frac1{\hat \sigma_2}\phi\left(\frac{z-\hat \mu_2}{\hat \sigma_2}\right)$ and $\hat g(z\mid h=1) = \frac{\hat \pi_{ \mu_3}}{\hat \pi}\frac1{\hat \sigma_3}\phi\left(\frac{z-\hat \mu_3}{\hat \sigma_3}\right)+\frac{\hat \pi_{\mu_4}}{\hat \pi}\frac1{\hat \sigma_4}\phi\left(\frac{z-\hat \mu_4}{\hat \sigma_4}\right), $ respectively, where $\phi(\cdot)$ is the density of the standard normal distribution.   The  estimated marginal locFDR value for a $z$-score $z$ is therefore $T_{marg}(z) = \frac{(1-\hat \pi)\hat g(z\mid h=0)}{(1-\hat \pi)\hat g(z\mid h=0)+\hat \pi\hat g(z\mid h=1)}$.

The confirmed discoveries were identified using the discovery $z$-vector and the validation $z$-vector, combined by Fisher's combining method.  So the $p$-value for a gene with values $(zd, zv)$ for the $z$-scores in the discovery study and the validation study, respectively, is the  probability that a chi-square distribution with four degrees of freedom is larger than $-2\log(\phi(zd))-2\log(\phi(zv))$. We expect the power to be greater when basing the inference on the pooled evidence from both the primary and follow-up meta-analyses, if the genes differentially expressed in the primary meta-analyses are also differentially expressed in the  follow-up meta-analyses. Applying  a multiple testing procedure,  which is supposed to yield a negligible amount of false positives, on these Fisher combined $p$-values, we receive trustworthy discoveries which we label as the  confirmed discoveries.

In Table \ref{tab-ulcer} we see that for both up-regulation, and separately down-regulation, est-OMT-FDR has the most rejections, followed in decreasing order by est-mFDR, adaptive BH,  and lastly the BH procedure. Moreover, this order of the number of rejections is retained also when restricted to the genes in the ``set of confirmed discoveries" using the BH procedure at the 0.05 level on the Fisher combined $p$-values.  A more conservative definition of ``set of confirmed discoveries" for which no false positives are expected, by the discoveries using the Bonferroni-Holm procedure at the 0.05 level, also retains this order of the number of rejections (not shown).

 \begin{table}[ht]
 \centering
 \caption{Results for the $K=15247$ genes for up-regulation (rows 1--2) and down-regulation (rows 3--4). We provide the number of rejections by the novel procedure (column 2) and the  competitors, as well as the number  of these rejections that  are in the ``set of confirmed discoveries". }\label{tab-ulcer}
 \begin{tabular}{lllll}
   \hline
   & est-OMT-FDR & est-mFDR & adaptive BH & BH \\ 
  \hline \# up regulated & 2409 & 2305 & 2264 & 2211 \\ 
  \# up regulated in ``set of confirmed discoveries"& 2276 & 2219 & 2189 & 2145 \\ 
   \hline
 \# down regulated  & 2023 & 1897 & 1837 & 1775 \\ 
\# down regulated in ``set of confirmed discoveries"& 1815 & 1731 & 1699 & 1671 \\ 
 \end{tabular}
 \end{table}


\section{Discussion}

In this paper, we provide the first practical approach to the problem of maximizing an objective which is linear in the decision functions, subject to FDR or pFDR control in the general two-group model. With the generic form of our formulation for finding the OMT policies in \eqref{prob-genericform}, it is also possible to consider other error criteria, e.g.,
 FWER ($Pr(V>0)$), or false discovery exceedance ($Pr(FDP>\gamma)$). As with FDR control, the optimal solution will be to threshold the locFDR at a value that depends on  the $K$ realized locFDR statistics. The error measures $\mE(V)$ and mFDR result in a much simpler solution (see derivation in \S~\ref{app-mfdr-rule} for mFDR control), where the threshold for rejection depends only on the mixture distribution. It is also possible to consider novel criteria, such as the probability of false discovery exceedance given that at least one rejections occurred, $\textrm{Pr}\left(\frac{V}{R}>\gamma \mid R>0 \right)$. Moreover, the formulation can be extended in a straightforward manner to control more than one error rate. For example, seek the OMT policy which controls for the FDR as well as for $\mE(V)$, thus potentially creating a powerful policy with meaningful control over the false discovery proportion in expectation without allowing for an unattractive policy which tends to reject many or very few hypotheses.

We provide efficient algorithms for computing the optimal policy for independent test statistics, as well as if the test statistics are equi-correlated or have a block dependence covariance structure. 
We demonstrate the large potential power gain from incorporating the dependence into the OMT policies. We expect the OMT policies  to be useful in genomic applications where the dependence is known. For example, in Genome-Wide association studies (GWAS), the covariance matrix is a known banded matrix, due to linkage disequilibrium. We plan to  provide efficient computational tools for the general two-group model with this type of local dependence. 

Our general two-group model assumes that the hypotheses states are independently sampled. \cite{Sun09} considered the setting where the underlying latent indicator variable of being null follows a homogeneous irreducible hidden markov (HMM) chain. In their setting, the test statistics conditional on  the hypotheses states are independent. Deriving solutions with an HMM structure in our framework is also an interesting direction for future work. 

\rhdel{We demonstrate the potential large power gain in aiming for optimal testing with FDR control, in comparison with the current state of the art of  optimal testing with mFDR control.  However, we observe that the optimal procedure for FDR control can be problematic when the signal is weak. At the extreme, it appears that the optimal policy is to either reject no hypotheses or to have a very high FDP. A similar behavior has been observed in {Rosset18}, where in certain situations the optimal multiple testing policy with strong frequentist FDR control is to reject all hypotheses if the optimal test of the global null is rejected, and to reject none otherwise. This may indicate a potentially problematic aspect of the FDR error criterion.}   

\rhdel{ An error measure such as false discovery exceedance control will also exhibit this bimodal behavior (results not shown). In order to avoid a policy that with high probability makes no rejections, but when making rejections can have a high FDP, the measure for control should aim at a small FDP only  when rejections are made. One such measure is the pFDR. Although the optimal policy with pFDR control has  lower average power than the optimal policy with FDR control, its probability of making no rejections at all can be far lower and in these cases the resulting policy is more attractive. Our results suggest that pFDR control is  preferred over FDR control in the two-group model.}

\rhdel{The potential gain is maintained also when the parameters are estimated, but care has to be taken in proper estimation of the mixture parameters. In particular, it appears that the estimation of the fraction of nulls has to be conservative when the actual fraction is fairly small. Further research into estimation methods tailored towards est-OMT-FDR and est-OMT-pFDR is needed. \rhcomm{add for dependence - discussion of estimated locfdrs}}

In the two-group model, the potential gain in power from applying optimal policies with FDR or pFDR control rather than mFDR control is maintained when the parameters are estimated in our numerical experiments, but care has to be taken in proper estimation of the mixture parameters in order to avoid an unacceptable inflation of the FDR or pFDR level. In particular, it appears that the estimation of the fraction of nulls has to be conservative when the actual fraction is fairly small. Further research into estimation methods tailored towards estimated optimal policies for FDR and pFDR control  is needed. {In our real data examples, the most discoveries (as well as the most validated discoveries, confirmed using a validation set) were made with the estimated optimal marginal policy for FDR control. In these examples, the test statistics are not independent but the dependence is unknown and expected to be limited in range and magnitude. Further research is needed to understand when local dependence can be ignored, as well as towards the development of  robust estimation methods for the mixture components.  }

\bibliographystyle{apalike}

\begin{thebibliography}{}

\bibitem[Amar et~al., 2018]{Amar18}
Amar, D., Vizel, A., Levy, C., and Shamir, R. (2018).
\newblock Adeptus: a discovery tool for disease prediction, enrichment and
  network analysis based on profiles from many diseases.
\newblock {\em Bioinformatics}, 34(11):1959--1961.

\bibitem[Anderson and Nash, 1987]{Anderson87}
Anderson, E. and Nash, P. (1987).
\newblock {\em Linear programming in infinite-dimensional spaces: theory and
  applications}.
\newblock Wiley-Interscience series in discrete mathematics and optimization.
  Wiley.

\bibitem[Benjamini, 2008]{Benjamini08}
Benjamini, Y. (2008).
\newblock Comment: Microarrays, empirical bayes and the two-groups model.
\newblock {\em Statistical Science}, 23(1):23--28.

\bibitem[Benjamini and Hochberg, 1995]{Benjamini95}
Benjamini, Y. and Hochberg, Y. (1995).
\newblock Controlling the false discovery rate - a practical and powerful
  approach to multiple testing.
\newblock {\em Journal of the Royal Statisical Society series B - statistical
  methodology}, 57(1).

\bibitem[Benjamini et~al., 2006]{Benjamini06}
Benjamini, Y., Krieger, A., and Yekutieli, D. (2006).
\newblock Adaptive linear step-up procedures that control the false discovery
  rate.
\newblock {\em Biometrika}, 93(3):491--507.

\bibitem[Blanchard and Roquain, 2009]{Blanchard09}
Blanchard, G. and Roquain, E. (2009).
\newblock Adaptive false discovery rate control under independence and
  dependence.
\newblock {\em Journal of machine learning research}, 10(8):2837--2871.

\bibitem[Cai and Sun, 2017]{Cai17}
Cai, T.~T. and Sun, W. (2017).
\newblock Optimal screening and discovery of sparse signals with applications
  to multistage high throughput studies.
\newblock {\em Journal of the Royal Statistical Society, Series b},
  79(1):197--223.

\bibitem[Efron, 2008]{Efron08}
Efron, B. (2008).
\newblock {Microarrays, empirical Bayes and the two-groups model}.
\newblock {\em Statistical Science}, 23(1):1--22.

\bibitem[Efron et~al., 2001]{Efron01}
Efron, B., Tibshirani, R., Storey, J., and Tusher, V. (2001).
\newblock {Empirical Bayes Analysis of a Microarray Experiment}.
\newblock {\em Journal of the American Statistical Association},
  96(456):1151--1160.

\bibitem[Genovese and Wasserman, 2002]{Genovese02}
Genovese, C. and Wasserman, L. (2002).
\newblock Operating characteristics and extensions of the false discovery rate
  procedure.
\newblock {\em Journal of the Royal Statisical Society series B - statistical
  methodology}, 64(3):499--517.

\bibitem[Jin and Cai, 2007]{Jin07}
Jin, J. and Cai, T. (2007).
\newblock Estimating the null and the proportion of nonnull effects in
  large-scale multiple comparisons.
\newblock {\em Journal of the American Statistical Association}, 102:495--506.

\bibitem[Korn and Korn, 2000]{korn2000mathematical}
Korn, G.~A. and Korn, T.~M. (2000).
\newblock {\em Mathematical handbook for scientists and engineers: definitions,
  theorems, and formulas for reference and review}.
\newblock Courier Corporation.

\bibitem[Lehmann and Romano, 2005]{LehmannRomano05}
Lehmann, E. and Romano, J. (2005).
\newblock {\em Testing Statistical Hypotheses}.
\newblock Springer.

\bibitem[Muralidharan, 2010]{Muralidharan10}
Muralidharan, O. (2010).
\newblock {An empirical Bayes mixture method for effect size and false
  discovery rate estimation}.
\newblock {\em The Annals of Applied Statistics}, 4(1):422--438.

\bibitem[Rosset et~al., 2018]{Rosset18}
Rosset, S., Heller, R., Painsky, A., and Aharoni, E. (2018).
\newblock Optimal procedures for multiple testing problems.
\newblock {\em arXiv:1804.10256}.

\bibitem[Shah et~al., 2016]{Shah16}
Shah, N., Guo, Y., Wendelsdorf, K., Lu, Y., Sparks, R., and Tsang, J. (2016).
\newblock A crowdsourcing approach for reusing and meta analyzing gene
  expression data.
\newblock {\em Nature biotechnology}, 34(8):803--806.

\bibitem[Storey, 2003]{Storey03}
Storey, J. (2003).
\newblock {The positive false discovery rate: A Bayesian interpretation and the
  q-value}.
\newblock {\em Annals of statistics}, 31(6):2013--2035.

\bibitem[Storey, 2007]{Storey07}
Storey, J.~D. (2007).
\newblock The optimal discovery procedure: a new approach to simultaneous
  significance testing.
\newblock {\em Journal of the Royal Statisical Society series B - statistical
  methodology}, 69(3):347--368.

\bibitem[Sun and Cai, 2007]{Sun07}
Sun, W. and Cai, T.~T. (2007).
\newblock Oracle and adaptive compound decision rules for false discovery rate
  control.
\newblock {\em Journal of the American Statistical Association},
  102(479):901--912.

\bibitem[Sun and Cai, 2009]{Sun09}
Sun, W. and Cai, T.~T. (2009).
\newblock Large-scale multiple testing under dependency.
\newblock {\em Journal of the Royal Statistical Association, Series B},
  71:393--424.

\bibitem[Xie et~al., 2011]{Xie11}
Xie, J., Cai, T., Maris, J., and Li, H. (2011).
\newblock Optimal false discovery rate control for dependent data.
\newblock {\em Statistics and its interface}, 4:417--430.

\end{thebibliography}

\appendix

\newcommand{\vl}{\ensuremath{\vec l}}
\newcommand{\vE}{\ensuremath{\vec E}}
\section{Proofs and additional mathematical details}\label{app-proofs}

\subsection*{Proof of Proposion \ref{prop1}}

Item 1 follows straightforwardly from the explanation in the paragraph leading to the proposition. 

Item 2 follows from the fact that OMT-mFDR is a single step procedure, yet OMT-pFDR is by construction the  step-down procedure described in \S~\ref{sec-algorithm}. Put another way, the necessary conditions for the OMT-mFDR policy lead to the single step procedure, and the OMT-pFDR policy does not satisfy these necessary conditions. For example, for $K=2$, 
let $\vE(\vz)$ and $\vD(\vz)$ be the OMT-mFDR and OMT-pFDR policies, respectively. 
Then $\{(T(z_1),T(z_2)): E_{1}(\vz) =1\} = \{(T(z_1),T(z_2)): T(z_1)\leq c\}$ for a constant $c$ which guarantees $mFDR(\vE) = \alpha$, but 
$\{(T(z_1),T(z_2)): D_1(\vz)=1\} = \left \lbrace (T(z_1),T(z_2)): T(z_1)\leq  \frac {1+\alpha}{1+\mu} \  \textrm{or} \ T(z_1)+T(z_2)\leq   \frac {2}{1+\mu/2}\right \rbrace$
 for a constant $\mu$ which guarantees $pFDR(\vD)=\alpha$. Clearly, the symmetric difference between the sets $\{(T(z_1),T(z_2)): E_{1}(\vz) =1\}$ and $\{(T(z_1),T(z_2)): D_1(\vz)=1\}$ has positive Lebesgue measure. 

For item 3, suppose by contradiction that $mFDR\leq pFDR$ for the OMT-pFDR policy. The OMT-pFDR policy is necessarily at least as powerful as the OMT-mFDR policy since the OMT-mFDR policy  controls the pFDR. So the OMT-pFDR policy is optimal for mFDR control if it satisfies $mFDR\leq pFDR$. But to achieve optimal mFDR control, a policy  has to satisfy 
 necessary conditions  which lead to the single step procedure. 
 This contradicts the fact that the OMT-pFDR policy is necessarily not a single step procedure, as shown for item 2 above.

Item 4 follows by the same reasoning as that of item 1. The OMT-FDR policy is necessarily at least as powerful as the OMT-pFDR policy since the OMT-pFDR policy  controls the FDR (which is bounded above by the pFDR). Therefore, if the OMT-FDR policy  controls the pFDR (since the probability of no rejections is zero), this must be the OMT-pFDR policy as well. Indeed, it is easy to see  that the step-down procedures for optimal pFDR and optimal FDR control  in  \S~\ref{sec-algorithm} coincide when the hypothesis with minimal locFDR is rejected with probability one.

\subsection*{Proof of Theorem \ref{thm-main}} 


Given a candidate solution $\vec D$, we prove the lemma by constructing an alternative solution $\vE$ that complies with the condition and has no lower objective and no higher constraint than $\vec D$.

For every pair of indexes $1\leq i<j\leq K$, define:
$$ A_{ij} =\left\{\vec z :  T_i(\vz)< T_j(\vz),\;  D_i(\vec z) =0, D_j(\vec z)=1 \right\}.$$

We will now examine the solution $\vE$ which is equal to $\vec D$ everywhere, except on the set $A_{ij}$, where it switches the value of coordinates $i,j$:
$$
E_k(\vz) = \left\{ \begin{array}{ll} D_k(\vz) & \mbox{if } \vz \notin A_{ij} \mbox{ or } k\notin \{i,j\}\\
1-D_k(\vz) & \mbox{if } \vz \in A_{ij} \mbox{ and } k\in \{i,j\} 
\end{array} \right. .
$$

We now show the following:
\begin{enumerate}
    \item For the integrated power in Eq. (\ref{prob}), $\Pi(\vE) \geq \Pi (\vec D)$.\label{lemma1-item1} 
    \item For the Error constraint in Eq. (\ref{const}), $Err(\vE) \leq  Err(\vec D)$.\label{lemma1-item2}
\end{enumerate}
Therefore $\vE$ is an improved solution compared to $\vec D$. This can be done for all $i,j$ pairs repeatedly until $\mP(A_{ij}) = 0 \forall i,j$, and we end up with $\vE$ which has the desired monotonicity property and is superior to $\vec D$. 
Since $\vec D^*$ the optimal solution cannot be improved, it must have this monotonicity property. 

It remains to prove properties \ref{lemma1-item1} and  \ref{lemma1-item2} above. For the power, we write the expression in Eq. (\ref{prob}) for $\vE$ and $\vD$ and subtract them:
\begin{eqnarray*}
\Pi(\vE) - \Pi(\vD) &=& \int_{\mR^K}  \sum_{l=1}^K E_l(\vz) (1-T_l(\vz)) \PP(\vz) d\vz - \int_{\mR^K}  \sum_{l=1}^K D_l(\vz) (1-T(z_l)) \PP(\vz) d\vz \\ &= &\int_{A_{ij}}  (T_j(\vz)-T_i(\vz))  \PP(\vz) d\vz \geq 0,
\end{eqnarray*}
where the second equality uses the definition of $\vE$
, and the  inequality follows since $T_j(\vz)>T_i(\vz)$ for  $\vz \in A_{ij}$.

The same idea applies to the FDR constraint:
\begin{eqnarray*}
FDR(\vD) &-& FDR(\vE) =  
\int_{\mR^K}  \sum_{l=1}^K \frac{E_l(\vz)}{\vone'\vE(\vz)} T_l(\vz) \PP(\vz) d\vz - \int_{\mR^K}  \sum_{l=1}^K \frac{D_l(\vz)}{\vone'\vD(\vz)} T_l(\vz) \PP(\vz) d\vz 
\\ &= &\int_{\mR^K}  \sum_{l=1}^K \frac{E_l(\vz)}{\vone'\vE(\vz)} T_l(\vz) \PP(\vz) d\vz - \int_{\mR^K}  \sum_{l=1}^K \frac{D_l(\vz)}{\vone'\vE(\vz)} T_l(\vz) \PP(\vz) d\vz 
\\ &= &\int_{A_{ij}}  \frac{1}{\vone'\vE(\vz)}(T_j(\vz)-T_i(\vz))  \PP(\vz) d\vz
\geq 0,
\end{eqnarray*}
where the second and third equalities follow since the difference between the two ratios is nonzero only in the numerator and  only on $A_{ij}$.

It remains to show that $pFDR(\vD) - pFDR(\vE) \geq 0$. This clearly follows since for $\vz \in A_{ij}$,  $\vec 1 ^t \vD(\vz) = \vec 1 ^t \vE(\vz)$, so 
\begin{eqnarray}
&& \int_{\mR^K} \II\{  \vec 1 ^t \vE(\vz)>0 \}  \PP(\vz) d\vz -\int_{\mR^K} \II\{  \vec 1 ^t \vD(\vz)>0 \} \PP(\vz) d\vz\nonumber \\ && =\int_{A_{ij}}  (\II\{  \vec 1 ^t \vE(\vz)>0 \}- \II\{  \vec 1 ^t \vD(\vz)>0 \}) \PP(\vz) d\vz=\int_{A_{ij}} 0\PP(\vz) d\vz = 0.\nonumber
\end{eqnarray}

So the denominators in $pFDR(\vD)$ and $pFDR(\vE)$ are the same, and hence  
$$pFDR(\vD) - pFDR(\vE) =  \frac{FDR(\vD)-FDR(\vE)}{\int_{\mR^K}   \II\{  \vec 1 ^t \vE(\vz)>0 \} \PP(\vz) d\vz}\geq 0.
$$

\subsection*{Derivation of Euler-Lagrange conditions for Problem (\ref{prob-lin})}

Our optimization problem is:  
\begin{equation*}
\begin{aligned}
& {\text{max}}
& & \int_{\mR^K} \sum_k a_k (\vec z) \tD_k(\vec z) \mP(\vz) d \vec z\\
& \text{s.t.}
& &  \int_{\mR^K} \sum_k b_{k}(\vec z)\tD_k(\vec z) \mP(\vz)  d \vec z \leq \alpha \quad\quad  \\
&
& & 0 \leq \tD_K(\vec z) \leq \dots \leq  \tD_j(\vec z) \leq \tD_i(\vec z) \leq \dots \leq \tD_1(\vec z) \leq 1 \quad\quad \forall \vz\in \mR^K.
\end{aligned}
\end{equation*}
We eliminate the inequality constraints, by introducing non-negative auxiliary variables, and then square those variables to also eliminate non-negativity constraints: \begin{equation}
\label{eq:E(V)}
\begin{aligned}
& {\text{max}}
& & \int_{\mR^K} \sum_k a_k (\vec z) \tD_k(\vec z) \mP(\vz) d\vec z\\
& \text{s.t.}
& &  \int_{\mR^K} \sum_k b_{k}(\vec z)\tD_k(\vec z) \mP(\vz) d\vec z + E^2 = \alpha  \\
&
& & \tD_K(\vec z)=e^2_{K+1}(\vz) \quad \quad  \quad \quad \quad  \quad \quad \forall \vec z\in \mR^K\\
&
& &  \tD_k(\vec z) - \tD_{k+1}(\vec z) = e^2_{k+1}(\vz) \quad \quad \;\; \forall 0<k<K, \; \vec z\in \mR^K\\
&
& &  1- \tD_1(\vec z) = e^2_1(\vz) \quad \quad \quad \quad \quad \quad \forall \vec z\in \mR^K\\
\end{aligned}
\end{equation}

The Euler-Lagrange (EL) necessary conditions for a solution to this optimization problem may be obtained through calculus of variations \citep{korn2000mathematical}. Let $y_1(x),y_2(x), \dots ,y_n(x): \mathbb{R} \rightarrow \mathbb{R}$ be a set of $n$ functions and
\begin{equation}
\label{cov_objective}
I=\int_{x_0}^{x_F} F(y_1(x),y_2(x), \dots ,y_n(x); y_1'(x),y_2'(x), \dots, y_n'(x) ;x)dx
\end{equation}
be a definite integral over fixed boundaries $x_0,x_F$. Every set of $y_1(x),y_2(x), \dots ,y_n(x)$ which maximize or minimize (\ref{cov_objective}) must satisfy a set of $n$ equations
\begin{equation}
\label{cov_conditions}
\frac{d}{dx}\left(\frac{\partial F}{\partial y'_i}\right)-\frac{\partial F}{\partial y_i}=0 \quad\quad i=1,\dots,n.
\end{equation}

In addition, let
\begin{equation}
\label{cov_constraints1}
\varphi_{j_1} (y_1(x),y_2(x), \dots ,y_n(x);x)=0  \quad\quad j_1=1,\dots,m_1 < n,
\end{equation}
be a set of $m_1<n$ point-wise equality constraints on $y_1(x),y_2(x), \dots ,y_n(x)$ and
\begin{equation}
\label{cov_constraints2}
\int_{x_0}^{x_F} \Psi_{j_2} (y_1(x),y_2(x), \dots ,y_n(x);y'_1(x),y'_2(x), \dots ,y'_n(x);x)=C_{j_2}  \quad\quad j_2=1,\dots,m_2,
\end{equation}
be a set of $m_2$ integral equality constraints on $y_1(x),y_2(x), \dots ,y_n(x)$. Then, every set of $n$ functions $y_1(x),y_2(x), \dots ,y_n(x)$ which maximize (\ref{cov_objective}), subject to the constraints (\ref{cov_constraints1}, \ref{cov_constraints2}) must satisfy the EL equations,
\begin{equation}
\label{cov_conditions2}
\frac{d}{dx}\left(\frac{\partial \Phi}{\partial y'_i}\right)-\frac{\partial \Phi}{\partial y_i}=0 \quad\quad i=1,\dots,n,
\end{equation}
where
\begin{equation}
\label{bla}
\Phi=F-\sum_{j_1=1}^{m_1}\lambda_{j_1}(x)\varphi_{j_1}-\sum_{j_2=1}^{m_2}\mu_{j_2} \Psi_{j_2}.
\end{equation}
The
unknown functions $\lambda_{j_1}(x)$ and constants $\mu_{j_2}$ are called the Lagrange multipliers. The differential equations in (\ref{cov_conditions2}) are necessary conditions for a maximum, provided that all the quantities on the left hand side of (\ref{cov_conditions2}) exist and are continuous.

Hence,  the set of $y_1(x),y_2(x), \dots ,y_n(x)$ which maximize (\ref{cov_objective}) subject to the constraints (\ref{cov_constraints1},\ref{cov_constraints2}), is to be determined, together with unknown Lagrange multipliers,  from  (\ref{cov_constraints1},\ref{cov_constraints2},\ref{cov_conditions2}).

This derivation may also be extended to a higher dimensional case, $x, y_1(x),y_2(x), \dots ,y_n(x) \in \mathbb{R}^d$, as appears in \cite{korn2000mathematical}. In this case the EL equations are
\begin{equation}
\label{cov_conditions3}
\sum_{k=1}^d \frac{\partial}{\partial x_k}\left(\frac{\partial \Phi}{\partial y_{i,k}}\right)-\frac{\partial \Phi}{\partial y_i}=0 \quad\quad i=1,\dots,n,
\end{equation}
where $y_{i,k}\triangleq\frac{\partial y_i}{\partial x_k}$ and $\Phi$ follows the same definition as in (\ref{bla}), with
\begin{eqnarray*}
& \int \Psi_{j_2} (y_1(x),y_2(x), \dots& ,y_n(x);y_{1,1}(x),y_{1,2}(x), \dots ,y_{1,d}(x), \dots,  y_{n,1}(x),y_{n,2}(x), \dots ,y_{n,d}(x) ;x)=\\
&&C_{j_2}  \quad\quad j_2=1,\dots,m_2 .
\end{eqnarray*}

Therefore, the Lagrangian $\Phi$ for our optimization problem (\ref{eq:E(V)}) is
\begin{align} \label{lagrangian}
\Phi =& \sum_k a_k (\vec z) \tD_k(\vec z) - \mu \left(\sum_k b_{k}(\vec z)\tD_k(\vec z)+E^2\right) -\\\nonumber
&\lambda_{K+1}(\vec z) \left(e^2_{K+1}(\vec z)-\tD_K(\vec z)\right)- \sum_{k=2}^{K} \lambda_{k}(\vec z) \left(e^2_k(\vec z) + \tD_{k}(\vec z) - \tD_{k-1}(\vec z)\right)- \lambda_{1}(\vec z) \left( \tD_1(\vec z)+e^2_1(\vec z)-1\right).
\end{align}

The necessary conditions for the minimizers of (\ref{eq:E(V)}) are that the original constraints are met with equality, and additionally
\begin{enumerate}
\item $\frac{\partial \Phi}{\partial  \tD_k(\vz)}=\mP(\vz) \left(a_k(\vec z)- \mu b_{k}(\vec z)\right)+\lambda_{k+1}(\vec z)-\lambda_{k}(\vec z)=0 \quad \quad \forall 1\leq k \leq K,; \vz\in \mR^K $   \label{derivative_1}
\item $\frac{\partial \Phi}{\partial  e_{k}(\vz)}=2e_{k}(\vec z)\lambda_{k}(\vec z) =0 \quad \quad \forall 1\leq k \leq K+1,\;\vz\in \mR^K$ 
\label{derivative_2}
\item $\frac{\partial \Phi}{\partial  E}=2\mu E= 0\quad \quad$
\label{derivative_3}
\end{enumerate}

It is interesting to notice that these condition are exactly the KKT conditions for the discrete optimization case, where $\vz$ is over a finite grid. Specifically, the first condition corresponds to the derivatives of the Lagrangian, while conditions (\ref{derivative_2}), (\ref{derivative_3}), are equivalent to the complementary slackness property.

Note also the multiplication by $\mP(\vz)$ in the first condition, which plays no role (since the scale of $\lambda_k(\vz)$ is arbitrary) and is eliminated in the main text for simplicity.  

\subsection*{Proof of Lemma \ref{lem2}}
Assume that for some $\vec z\in \mR^K$ and index $j$ we have that $0<\tD_j(\vec z)<1$. Then it is easy to see that out of the $K+1$ constraints implied by conditions \eqref{KKT-CS2}--\eqref{KKT-CS4}, at least two will require $\lambda_i = 0$ to hold: for example, if $0<\tD_1(\vec z) <1$ and $\tD_2(\vec z)=\ldots=\tD_K(\vec z)=0$, we will have that $\lambda_1(\vec z)=\lambda_2(\vec z) = 0$ to maintain complementary slackness.

Assume wlog that $\lambda_l(\vec z)=\lambda_j(\vec z)=0$ for some $l<j$. Now we can sum the equations between $l$ and $j-1$ in the stationarity condition (\ref{KKT-stat}):
\begin{eqnarray*}
&&\sum_{i=l}^{j-1} \left[ a_i(\vec z)- \mu b_{i}(\vec z)-\lambda_{i}(\vec z)+\lambda_{i+1}(\vec z) \right]= \\ &&\;\;\;\;\;\; \sum_{i=l}^{j-1} \left[ a_i(\vec z)- \mu b_{i}(\vz)\right] = 0,
\end{eqnarray*}
where all the $\lambda$ terms have cancelled out due to the telescopic nature of the sum, and $\lambda_l=\lambda_j=0$.

Hence we have concluded that having any non-binary value in the optimal solution $\tD^*(\vec z)$ implies
$$
 \sum_{i=l}^{j-1} \left\lbrace a_i(\vec z)- \mu b_{i}(\vec z)\right\rbrace =  0,
$$
which has probability zero since, by our assumption for the two-group model, $\sum_{i=l}^{j-1} \left\lbrace a_i(\vec Z)- \mu b_{i}(\vec Z)\right\rbrace$ is a continuous random variable. 

\subsection*{Derivation of dual to Problem  (\ref{prob-lin}) and proof of Lemma \ref{lem3}}

The result in Lemma \ref{lem3} relies on explicit derivation of the dual to the  infinite linear program (\ref{prob-lin}) (see \cite{Anderson87} for details on derivation of dual to infinite linear programs):
\begin{eqnarray}
\min_{\mu,\lambda} &&\alpha \mu  + \int_{\mR^K} \lambda_1(\vec z) d \vec z  \label{eq-dual}\\
\mbox{s.t.}&& a_k(\vec z) - \mu b_{k}(\vec z) + \lambda_{k+1}(\vec z) - \lambda_k(\vec z) \leq 0,\;\forall k, \vec z \nonumber \\
&&\lambda_k(\vec z) \geq 0,\;\forall k,\vec z \;;\;\;\mu \geq 0. \nonumber
\end{eqnarray}

\noindent{\bf Proof of Lemma \ref{lem3}:} Feasibility of dual solution holds by construction: $\mu, \lambda$ are non-negative Largange multipliers by definition, and the EL conditions require that
$$a_i(\vec z) - \mu^* b_{i}(\vec z)-\lambda^*_{i}(\vec z)+\lambda^*_{i+1}(\vec z) = 0\;,\forall i,\vec z.$$

To calculate the dual objective, we explicitly derive the value of $\lambda^*_1(\vec z)$ as a function of the other variables.
 If $\tD^*_K(\vec z)=1$, then $\lambda^*_{K+1}(\vec z) =0$ and it is easy to see from  \eqref{KKT-stat}--\eqref{KKT-CS4} that $\lambda^*_1(\vec z)$ is equal to
$$
\sum_{i=1}^K \left(a_i(\vec z) - \mu^* b_{i}(\vec z)\right).
$$
Similarly, if $\tD^*_{j-1}(\vec z)-\tD^*_j(\vec z)=1$ for $j\in \{2,\ldots, K-1\}$, then $\lambda^*_j(\vec z) = 0$ and $\lambda^*_1(\vec z)$ is equal to
$$
\sum_{i=1}^{j-1}\left( a_i(\vec z) - \mu^* b_{i}(\vec z)\right).
$$
It thus follows that $$\lambda^*_1(\vec z) = \sum_{j=1}^K \tD^*_j(\vec z)\left(a_j(\vec z) - \mu^* b_{j}(\vec z)\right). $$
Therefore, $$\int_{\mR^K} \lambda^*_1(\vec z) d\vec z = \int_{\mR^K} \left(\sum_{j=1}^K a_j(\vec z) \tD^*_j(\vec z)\right)d \vec z -  \mu^* \int_{\mR^K} \left(\sum_{j=1}^K b_{j}(\vec z) \tD^*_j(\vec z)\right). $$
Therefore the dual objective is equal to the primal objective:
\begin{eqnarray*}
&& \sum_{L=0}^{K-1} \mu^* \alpha + \int_{\mR^K} \left(\sum_{j=1}^K a_j(\vec z) \tD^*_j(\vec z)\right)d\vec z -  \mu^* \int_{\mR^K} \left(\sum_{j=1}^K b_{j}(\vec z) \tD^*_j(\vec z)\right)d \vec z \nonumber\\
&& = \mu^* \left\lbrace\alpha - \int_{\mR^K} \left(\sum_{j=1}^K b_{j}(\vec z) \tD^*_j(\vec z)\right)d\vec z\right\rbrace + \int_{\mR^K} \left(\sum_{j=1}^K a_j(\vec z) \tD^*_z(\vec z)\right)d\vec z \nonumber\\
&& =  \int_{\mR^K} \left(\sum_{j=1}^K a_j(\vec z) \tD^*_j(\vec z)\right)d\vec z,\nonumber
\end{eqnarray*}
where we have used the complementary slackness condition for the $\mu^*$ in the last equality.

\section{Derivation of the expression for $FDR(\vec D)$ in \eqref{const-lin-FDR}}\label{app-derFDR}
By Theorem  \ref{thm-main}, on Q,  ${\vone}^t \vD(\vz)=k$ if and only if $\tD_1=\ldots \tD_k=1$ and $\tD_{k+1}=\ldots=\tD_K=0$, i.e., if and only if $\tD_k-\tD_{k+1}=1$.  Therefore: 
\begin{eqnarray}
 \sum_{i=1}^K\frac{\tD_i(\vz)} {{\vone}^t \vD(\vz)}T_{(i)}(\vz)  &=& \sum_{k=1}^{K-1} \sum_{i=1}^k \frac 1kT_{(i)}(\vz)\lbrace \tD_k(\vz)-\tD_{k+1}(\vz) \rbrace +\tD_K(\vz) \sum_{i=1}^K \frac 1KT_{(i)}(\vz) \nonumber\\ &=& \sum_{k=1}^{K-1} \bar{T}_k(\vz)\lbrace \tD_k(\vz)-\tD_{k+1}(\vz) \rbrace +\tD_K(\vz) \bar{T}_{(K)}(\vz) \nonumber \\ &=&
 \sum_{k=1}^{K} \bar{T}_k(\vz) \tD_k(\vz)-\sum_{k=2}^{K} \bar{T}_{k-1}(\vz) \tD_k(\vz)=\tD_1(\vz)T_{(1)}(\vz)+\sum_{k=2}^K\tD_k(\vz)\lbrace \bar{T}_k(\vz)-\bar{T}_{k-1}(\vz) \rbrace\nonumber \\ &=&
 \tD_1(\vz)T_{(1)}(\vz)+\sum_{k=2}^K\tD_k(\vz)\frac 1k\lbrace T_{(k)}(\vz)-\bar{T}_{k-1}(\vz) \rbrace\label{eq-app-B}
\end{eqnarray}

For the general two-group model, from \eqref{const-FDR} it thus follows that
$$  \int_{\mR^K}  \sum_{i=1}^K\frac{D_i(\vz)} {{\vone}^t \vD(\vz)}T_{i}(\vz)  \PP(\vz)    d\vz =  \int_{\mR^K}  \tD_{(1)}(\vz)T_{(1)}(\vz)+\sum_{k=2}^K\tD_k(\vz)\frac 1k\lbrace T_{(k)}(\vz)-\bar{T}_{k-1}(\vz)  \PP(\vz)    d\vz ,$$
where  the second equality follows from expression \eqref{eq-app-B} derived for $\tD$ that is  weakly monotone in the locFDRs.

\section{An alternative proof of the rejection policy for OMT with mFDR control}\label{app-mfdr-rule}
We shall show that the solution to the optimization problem of finding the optimal decision rule with the  expected number of true rejections as the objective and the  mFDR at  most level $\alpha$ as the constraint, coincides with the rule of \cite{Xie11} for the two-group model. 

The constraint $mFDR\leq \alpha$  is  equivalent to  $\mE(V(\vec D))-\mE(R(\vec D))\alpha\leq 0$, where 
\begin{eqnarray}
&& \mE \lbrace V(\vec D)-\alpha R(\vec D) \rbrace =(1-\alpha)\mE \lbrace V(\vec D)\rbrace-\alpha\mE \lbrace R(\vec D) - V(\vec D) \rbrace \nonumber\\
&&= (1-\alpha) \int_{\mR^K} \sum_{i=1}^K D_i(\vec z) T_i(\vz) \PP(\vz) d\vz\nonumber-\alpha \int_{\mR^K} \sum_{i=1}^K D_i(\vec z) \lbrace 1- T_i(\vz)\rbrace  \PP(\vz) d\vz\nonumber \\
&&= \int_{\mR^K} \sum_{i=1}^K D_i(\vec z)\lbrace (1-\alpha)-(1-T_i(\vz))\rbrace  \PP(\vz)  d\vz\nonumber.
\end{eqnarray}

Therefore, the linear program for maximizing the objective subject to mFDR control is \eqref{prob-genericform}
where  $b_{i}(\vz) = (1-\alpha)-(1-T_i(\vz))=T_i(\vz)-\alpha$ and $a_i(\vz) =1-T_i(\vz) $. 


 As in the FDR proof, the EL necessary optimality conditions are: 
\begin{eqnarray}
&&a_i(\vz) - \mu b_{i}(\vz)-\lambda_{i1}(\vz)+\lambda_{i2}(\vz) = 0,\;\;\forall \vz \in \mR^K, i=1,\ldots,K. \label{KKT-stat-mfdr}\\
&& \mu\left\lbrace 
\int_{\mR^K} \PP(\vz) \sum_{i=1}^K D_i(\vec z)b_i(z_i) d\vz
\right\rbrace=0,\label{KKT-CS1-mfdr} \\
&&\lambda_{i1}(\vz)D_i(\vz) = 0 \ \forall \vz\in \mR^K, i=1,\ldots,K.  \label{KKT-CS2-mfdr}\\
&&\lambda_{i2}(\vz) (D_i(\vz)-1 ) = 0\;,\;\forall \vz \in \mR^K, i=1,\ldots,K\label{KKT-CS4-mfdr},
\end{eqnarray}
where $\mu,\lambda_{ij}(\vz),\; i=1,\ldots,K, j=1,2,\;\vz\in \mR^K$ are non-negative Lagrange multiplies.
The solution that satisfies \eqref{KKT-stat-mfdr},\eqref{KKT-CS2-mfdr}, and \eqref{KKT-CS4-mfdr} is guaranteed to be an integer solution, 
since if $0<D_i(\vz)<1$ it follows that $\lambda_{i1}(\vz)=\lambda_{i2}(\vz)=0$ and therefore that $a_i(\vz)-\mu b_i(\vz) = 0$. Moreover, following steps similar to  the ones in the FDR proof of Lemma \ref{lem3}, it can be shown that conditions \eqref{KKT-stat-mfdr}-\eqref{KKT-CS4-mfdr} together with primal feasibility are sufficient.  

Clearly, given $\mu>0$, almost surely the rejection policy that satisfies \eqref{KKT-stat-mfdr},\eqref{KKT-CS2-mfdr}, and \eqref{KKT-CS4-mfdr} is  
$$D^\mu_i(\vec z) = \II\left \lbrace a_i(\vz)-\mu b_i(\vz)>0 \right\rbrace= \II\left \lbrace T_i(\vz)< \frac{1+\mu\alpha}{1+\mu} \right\rbrace.$$
Therefore, all that remains is to find $\mu$   that satisfies $\int_{\mR^K} \PP(\vz) \sum_{i=1}^K D^{\mu}_i(\vec z)b_i(\vz) d\vz=0$, i.e., 
$ \mE(V(\vec D^{\mu})) -\mE(R(\vec D^{\mu}))\alpha= 0.$




\section{Gene expression data analysis in the ADEPTUS database}\label{app-adeptus-example}

A recently compiled high-quality curated database spanning more than 38000 gene expression profiles and more than 100 phenotypes (mainly a disease or a control label) is the ADEPTUS database \cite{Amar18}. We shall use a subset of the gene expression studies in the database to illustrate our suggested methods. 

Our starting point is the one-sided $p$-values as computed in the ADEPTUS database for the 149 gene expression studies (so down-regulation will tend to have a  $p$-value close to zero, and up-regulation will tend to have  a $p$-value close to one). We estimated the fraction of differentially expressed genes using Storey's plug-in method with parameter $\lambda = 0.05$, as suggested in \cite{Blanchard09}. We selected the studies which had an estimated proportion of signal of at least 15\%. This crude selection of studies is convenient, since we expect for the vast majority of the selected studies that the procedures est-OMT-FDR and est-OMT-pFDR coincide, and therefore we only used est-OMT-FDR in our re-analysis of these datasets. Moreover, we wanted to avoid having a fraction of nulls that is too small for proper evaluation of the mixture parameters. In addition,  we restricted our attention to studies that had well behaved $p$-values, i.e. that the $z$-score corresponding to the $p$-values where all between -10 and 10. We were left with 20 studies to analyze, and each study had $K= 10081$ genes. 

We assume the $z$ scores are independently generated from a mixture of three normal densities, for up-regulation (positive expected value for the $z$-score), down regulation (negative expected value for the $z$-score), and for the null component which is assumed to have  a standard normal distribution.  As in \S~\ref{sec-gene-expression}, we estimated the mixture components using the R package \emph{mixfdr} by \cite{Muralidharan10}. 

Figure \ref{fig-adeptus} shows that est-OMT-FDR tends to make the most rejections, followed by adaptive BH, then est-mFDR, and finally BH. Thus, at least in terms of number of discoveries, the novel procedure with estimated mixture components by mixFDR has an advantage over the others.
\begin{figure}
\includegraphics{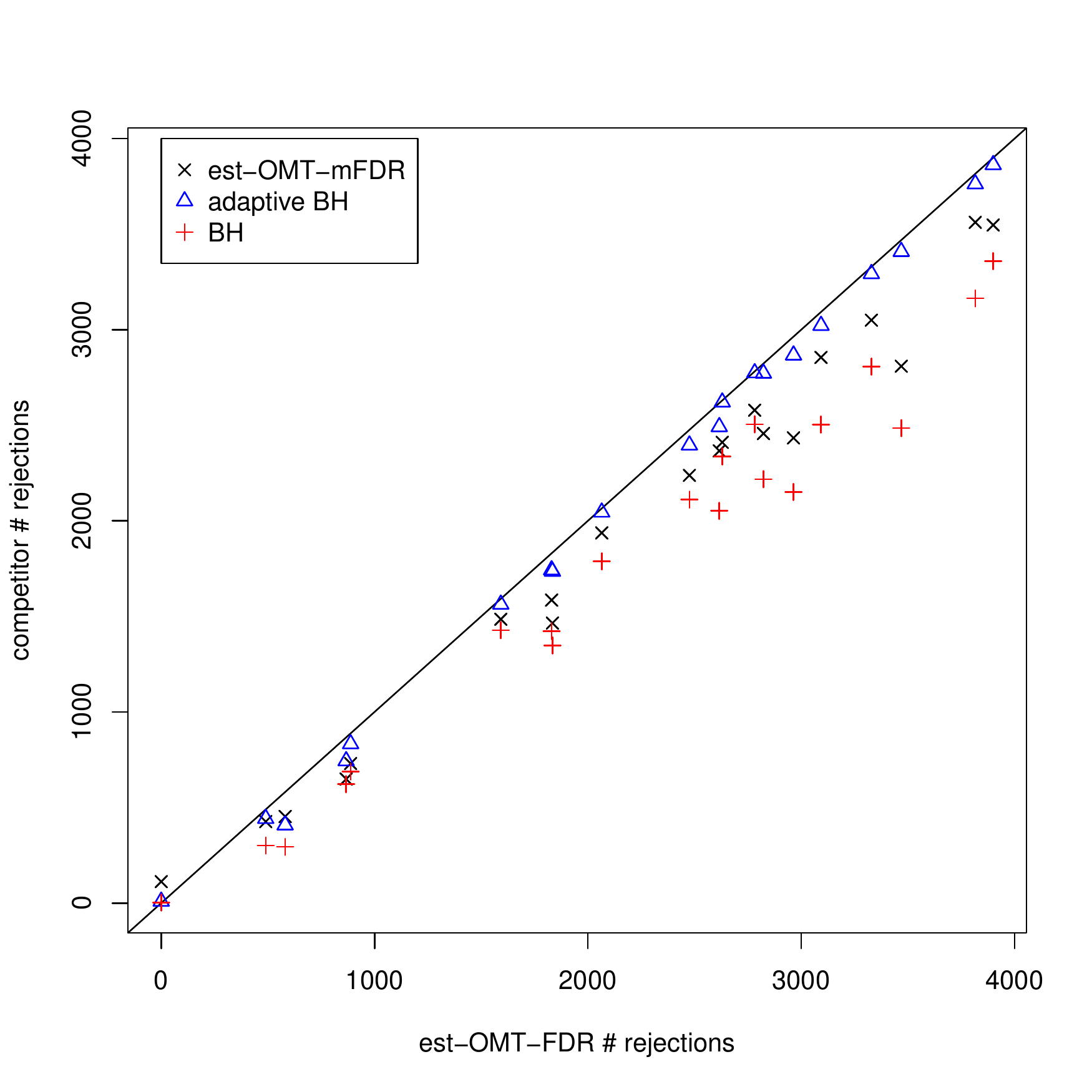}
\caption{The number of discoveries of competitors versus est-OMT-FDR. Diagonal line in solid black. }\label{fig-adeptus}. 
\end{figure}

\section{ Calculating locFDR under equi-correlation}\label{app-equicor}

As we have seen, under any type of dependence, the problem of controlling FDR, pFDR, mFDR in the two-group model boils down to calculating the localFDRs $T_i$ in a manner which takes in the depdendence structure: 

\begin{eqnarray} T_i(\vZ) &=& \mP (h_i = 0 | \vZ) = \frac{\mP(\vZ,h_i=0)}{\mP(\vZ)} = 
 \frac{\sum_{\vh:h_i=0} \mP(\vh) \mP(\vZ|\vh)}{\sum_{\vh} \mP(\vh) \mP(\vZ|\vh)}. \label{Eq:Ti}
\end{eqnarray}

The sum in the denominator contains $2^K$ terms, and the one in the numerator a subset of $2^{K-1}$ of them. For large $K$ this is not practical in the general case. 

As shown in the main text (\S \#), when the test statistics are independent, we have a simple calculation that uses only $Z_i$ to calculate $T_i$. In the main text we also discuss the case where the dependence is in relatively small blocks, in which case the complexity becomes exponential in block size rather than in $K,$ and linear in the number of blocks. 

Here we discuss a case where all entries in $\vZ$ are dependent, but the calculations can still be done efficiently with careful analysis. 

Consider the equi-correlated case  where 
$$ \vZ | \vh \sim \left(\delta \vh, \Sigma\right),$$
where $\Sigma$ and its inverse are both equi-correlation matrices: $$ \Sigma_{ij} = \left\{ \begin{array}{cc} 
\sigma^2 & \mbox{if } i=j\\
\rho \sigma^2 & \mbox{otherwise}
\end{array} \right. \;\;,\;\;\;\Sigma^{-1}_{ij} =  \left\{ \begin{array}{cc} 
a:=\frac{-1-\rho(K-2)}{\sigma^2 \left(\rho(K-1)+1\right) (\rho-1)} & \mbox{if } i=j\\
b:=\frac{\rho}{\sigma^2 \left(\rho(K-1)+1\right) (\rho-1)} & \mbox{otherwise}
\end{array} \right. ,$$
and $\delta<0$ is the mean for the alternative hypothesis. It is easy to see that the inverse $\Sigma^{-1}$ also has similar simple structure:

We show how to calculate the denominator of Eq. (\ref{Eq:Ti}) for this case in $O(K^2)$ complexity. Writing $ \mP(\vZ|\vh)$ explicitly, we get: 

$$ \mP(\vZ|\vh) = \frac{1}{(2\pi)^{K/2} |\Sigma|^{0.5}} \exp\left\{ -\frac{1}{2}\vZ^t \Sigma^{-1} \vZ\right\} \exp\left\{ -\frac{1}{2} \delta^2 \vh^t \Sigma^{-1} \vh\right\} \exp\left\{  \delta \vh^t \Sigma^{-1} \vZ\right\}.
$$
Examining the three terms in the exponents we observe that the first one is independent of $\vh,$ and the second one depends only on the number of non-zeroes in $\vh$. Assume this number is $k  = \vec 1^T \vh$, then we get:
$$
s_k := \vh^t \Sigma^{-1} \vh = a\cdot k  + b\cdot \left(k(k-1)\right)\;,\;\; k = \vec 1^T \vh. 
$$
where as defined above: 
$$ a=\frac{-1-\rho(K-2)}{\sigma^2 \left(\rho(K-1)+1\right) (\rho-1)}\;,\;\;b= \frac{\rho}{\sigma^2 \left(\rho(K-1)+1\right) (\rho-1)}.$$

Using these facts we can rewrite the denominator of Eq. (\ref{Eq:Ti}) as: 
\begin{eqnarray*}
\sum_{\vh} \mP(\vh) \mP(\vZ|\vh) &=& \frac{1}{(2\pi)^{K/2} |\Sigma|^{0.5}} \exp\left\{ -\frac{1}{2}\vZ^t \Sigma^{-1} \vZ\right\} \cdot  \\ && \sum_{k=0}^K \left[ \exp\left\{ -\frac{1}{2} \delta^2 s_k\right\} \sum_{\vh: k = \vec 1^T \vh} 
\exp\left\{  \delta \vh^t \Sigma^{-1} \vZ\right\} \right].
\end{eqnarray*}
Thus, the key is being able to calculate the $K+1$ sums in the last term efficiently. 

Using our $a,b$ notation we can write (assuming $ k = \vec 1^T \vh$): 
$$ \vh^t \Sigma^{-1} \vZ = b \cdot k  \cdot S_Z +(a-b) \vh^T \vZ\;,\;\;S_Z= \vec 1^T \vZ, $$ 
and we can take advantage of this representation to design a simple recursive algorithm to calculate the needed sums. Denote:
$$S(L,k) = \sum_{\vh \in \{0,1\}^L: k = \vec 1_L^T \vh} 
\exp\left\{  \delta \vh^t \Sigma_{L}^{-1} \vZ\right\} = \sum_{\vh \in \{0,1\}^L: k = \vec 1_L^T \vh} 
\exp\left\{  \delta \left(b \cdot k  \cdot S_Z +(a-b) \vh^T \vZ_L\right)\right\},$$
where $\Sigma_{L}^{-1}$ is the $L\times K$ matrix formed by taking the first $L$ rows of $\Sigma^{-1}$, 
and $\vZ_L$ denotes the first $L$ elements of $\vZ$. Thus, $S(K,k)$ is the $k^{th}$ sum in the needed calculation. Using this representation, it is easy to observe:
\begin{eqnarray*}
&& S(L,0)=1\;,\;0\leq L\leq K\;,\;\;\;\;S(0,k) = 0\;,\;0<k\leq K \\
&& S(L,k) = S(L-1,k)  + S(L-1,k-1) \exp\left\{\delta \left(b S_Z + (a-b) Z_L\right)\right\}\;,\;\;L>0,k>0.
\end{eqnarray*}
Each calculation of $S(L,k)$ given $S(L-1,\cdot)$ and precalculated $S_Z$ requires $O(1)$ operations, hence calculating $S(K,k)\;,\;k=0,\ldots K$ requires $O(K^2)$ calculations. 

Taking all of this together, we conclude that the complete calculation of:
\begin{eqnarray*}
\sum_{\vh} \mP(\vh) \mP(\vZ|\vh) &=& \frac{1}{(2\pi)^{K/2} |\Sigma|^{0.5}} \exp\left\{ -\frac{1}{2}\vZ^t \Sigma^{-1} \vZ\right\} \cdot  \sum_{k=0}^K \left[ \exp\left\{ -\frac{1}{2} \delta^2 s_k\right\} S(K,k) \right],
\end{eqnarray*}
can be done in $O(K^2),$ as all other calculations and summations except computing $S(K,k)$ are $O(K).$

For the calculation of the numerator of Eq. (\ref{Eq:Ti}), we can employ similar methodology, noting: 
\begin{eqnarray*}\sum_{\vh:h_i=0} \mP(\vh) \mP(\vZ|\vh) &=& \frac{1}{(2\pi)^{K/2} |\Sigma|^{0.5}} \exp\left\{ -\frac{1}{2}\vZ^t \Sigma^{-1} \vZ\right\} \cdot  \\ && \sum_{k=0}^{K-1} \left[ \exp\left\{ -\frac{1}{2} \delta^2 s_k\right\} \sum_{\vh: k = \vec 1^T \vh,h_i=0} 
\exp\left\{  \delta \vh^t \Sigma^{-1} \vZ\right\} \right].
\end{eqnarray*}
Hence we can follow the same steps as above to define $S^{(i)}(L,k),$ which evolves as $S(L,k),$ except when $L=i$:
$$ S^{(i)}(i,k) = S^{(i)}(i-1,k),$$
since we do not have to handle the case $h_i=1.$

Consequently, with a naive calculation of each $T_i$ numerator separately, we can calculate all locFDR values $T_1,\ldots,T_K$ in $O(K^3)$ complexity for this case. 
It seems plausible that an  approach for simultaneously generating $S^{(i)},i=1,\ldots,K$ could reduce the complexity to $O(K^2),$ but we do not currently have such an algorithm. 

In our experiments we were easily able to use this approach to calculate OMT policies for this setting with $K=1000.$  With the naive exponential calculation, any $K>50$ already becomes untenable.

\end{document}